\newcommand{\extended}[1]{}    \newcommand{\short}[1]{#1}     \renewcommand{\extended}[1]{#1} \renewcommand{\short}[1]{}
\newcommand{\Actions}{Act}
\newcommand{\Act}{\ensuremath{\mathit{Act}}}
                                    \newcommand{\onepath}[1][]{\ensuremath{\lambda\ifthenelse{\equal{#1}{}}{}{[#1]}}}         \newcommand{\model}{M}
\newcommand{\TransArrow}[1][]{\hookrightarrow....}
\newcommand{\Props}  {\ensuremath{\Pi}}
\newcommand{\lan}[1]{\textsf{#1}\xspace}
\newcommand{\CSL}[1][]{\lan{csl}}
\newcommand{\CSLP}[1][]{\lan{cslp}}
\newcommand{\ATL}[1][]{\lan{ATL$_{\textup{{#1}}}$}}
\newcommand{\ATLs}[1][]{\lan{ATL$_{\textup{{#1}}}$}$\ast$}
\newcommand{\complexityclass}[1]{\ensuremath{\mathbf{{#1}}}\xspace}
\newcommand{\Ptime}{\complexityclass{P}}
\newcommand{\Deltacomplx}[1]{\complexityclass{\Delta_{{#1}}^{\Ptime}}}
\newcommand{\Deltwo}{\Deltacomplx{2}}
\newcommand{\putaway}[1]{}
\newcommand{\para}[1]{\smallskip\noindent\textbf{#1}}
\definecolor{lightgrey}{rgb}{0.8,0.8,0.8}
\definecolor{grey}{rgb}{0.6,0.6,0.6}
\definecolor{darkgrey}{rgb}{0.4,0.4,0.4}
\definecolor{darkgreen}{rgb}{0,0.7,0}
\newenvironment{description2}{\begin{description}\itemsep 0in}{\end{description}}
\newcommand{\finis}{{\scriptsize $\blacksquare$}}
\newcommand{\finisdef}{$\Box$}
\newcommand{\bul}{{\tiny $\blacksquare$}}
\def\itemiremember{\labelitemi}
\def\itemiiremember{\labelitemii}
\newcommand{\coop}[2][]{\langle\!\langle{#2}\rangle\!\rangle_{_{\!\mathit{#1}}}}
\newcommand{\Epath}{\mathsf{E}}
\newcommand{\Apath}{\mathsf{A}}
\newcommand{\Next}[1][]{\!\raisebox{-.2ex}{ \mbox{\unitlength=0.9ex
            \begin{picture}(2,2)
            \linethickness{0.06ex}
            \put(1,1){\circle{2}}   \end{picture}}}_{{#1}}  \,}
\newcommand{\Sometm}[1][]{\Diamond_{{#1}}}
\newcommand{\Always}[1][]{\Box_{{#1}}}
\newcommand{\Until}[1][]{\,\mathcal{U}_{{#1}}\,}
\newcommand{\plaus}[1][]{\ifthenelse{\equal{#1}{}}{\mathbf{P\;\!\!l}\,}{\mathbf{P\;\!\!l}_{#1}\,}}
\newcommand{\phys}[1][]{\ifthenelse{\equal{#1}{}}{\mathbf{P\;\!\!h}\,}{\mathbf{P\;\!\!h}_{#1}\,}}
\newcommand{\plaumodels}[1][]{\ensuremath{\ifthenelse{\equal{#1}{}}{\models_\sPlaupaths}{\models_{#1}}}}
\newcommand{\sPlaupaths}{\ensuremath{P}}
\newcommand{\then}{\rightarrow}
\newcommand{\true}{\top}
\newcommand{\false}{\bot}
\newcommand{\set}[1]{\{{#1}\}}
\newcommand{\tuple}[1]{\langle{#1}\rangle}
\newcommand{\Agt}{\ensuremath{\mathbb{A}\mathrm{gt}}}
\newcommand{\prop}[1]{\ensuremath{\mathsf{{#1}}}}
\newcommand{\act}{\mathsf{act}}
\title[Playing to Learn, or to Keep Secret]{Playing to Learn, or to Keep Secret:\\ Alternating-Time Logic Meets Information Theory}
\author{Masoud Tabatabaei}
\affiliation{
  \institution{Interdisciplinary Centre for Security and Trust, SnT, \\ University of Luxembourg}
  \city{}
  \country{}}
\email{masoud.tabatabaei@uni.lu}
\author{Wojciech Jamroga}
\affiliation{
  \institution{Institute of Computer Science, Polish Academy of Sciences}
  \city{and SnT, University of Luxembourg}
  \country{}}
\email{wojciech.jamroga@uni.lu}
\begin{abstract}
Many important properties of multi-agent systems refer to the participants' ability to achieve a given goal, or to prevent the system from an undesirable event. Among intelligent agents, the goals are often of epistemic nature, i.e., concern the ability to obtain knowledge about an important fact $\varphi$. Such properties can be e.g.~expressed in \ATLK, that is, alternating-time temporal logic \ATL extended with epistemic operators.
In many realistic scenarios, however, players do not need to fully learn the truth value of $\varphi$. They may be almost as well off by gaining \emph{some} knowledge; in other words, by reducing their uncertainty about $\varphi$. Similarly, in order to keep $\varphi$ secret, it is often insufficient that the intruder never fully learns its truth value. Instead, one needs to require that his uncertainty about $\varphi$ never drops below a reasonable threshold.

With this motivation in mind, we introduce the logic \ATLH, extending \ATL with quantitative modalities based on the Hartley measure of uncertainty. The new logic enables to specify agents' abilities w.r.t. the uncertainty of a given player about a given set of statements. It turns out that \ATLH has the same expressivity and model checking complexity as \ATLK. However, the new logic is exponentially more succinct than \ATLK, which is the main technical result of this paper.
\end{abstract}
\keywords{Multiagent Systems, Knowledge Representation, Uncertainty}
\newcommand{\pairModels}[2]{\langle #1 \circ #2 \rangle}
\renewcommand{\ATLs}{\textbf{ATL*}\xspace}
\newcommand{\CTL}{\textbf{CTL}\xspace}
\renewcommand{\ATL}{\textbf{ATL}\xspace}
\newcommand{\ATLK}{\textbf{ATLK}\xspace}
\renewcommand{\Next}{X}
\renewcommand{\Always}{G}
\renewcommand{\Sometm}{F}
\newcommand{\notA}{\overline{A}}
\newcommand{\notB}{\overline{B}}
\newtheorem*{theorem*}{Theorem}
\newtheorem{observation}{Observation}
\newcommand{\States}{\ensuremath{St}}
\renewcommand{\Agt}{\mathit{Agt}}
\renewcommand{\Act}{\mathit{Act}}
\renewcommand{\act}{\mathit{d}}
\newcommand{\outcome}{\mathit{out}}
\newcommand{\indist}{\sim}
\newcommand{\atomicP}{\mathit{Pv}}
\renewcommand{\Props}{\mathit{Prop}}
\newcommand{\valualtion}{\mathit{V}}
\newcommand{\uncertain}{\mathcal{H}}
\newcommand{\knows}{\mathcal{K}}
\newcommand{\ATHL}{\textbf{ATLH}\xspace}
\newcommand{\ATLH}{\textbf{ATLH}\xspace}
\newcommand{\ATEL}{\textbf{ATLK}\xspace}
\newcommand{\SL}{\textbf{SL}\xspace}
\newcommand{\sOne}{$\mathbf{S1_j}$\xspace}
\newcommand{\sTwo}{$\mathbf{S2_j}$\xspace}
\newcommand{\sThree}{$\mathbf{S3_j}$\xspace}
\newcommand{\sFour}{$\mathbf{S4_j}$\xspace}
\newcommand{\sFive}{$\mathbf{S5_j}$\xspace}
\newcommand{\sSix}{$\mathbf{S6_j}$\xspace}
\renewcommand{\act}{\mathit{act}}
\newcommand{\highlightone}[1]{\underline{#1}}
\newcommand{\highlighttwo}[1]{\underline{#1}}
\begin{document}

\pagestyle{fancy}
\fancyhead{}

\maketitle

\section{Introduction}\label{sec:introduction}

Many important properties of multi-agent systems refer to \emph{strategic abilities} of agents and their groups~\cite{Goranko15stratmas,Agotnes15handbook}. They can be formalized in logics of strategic ability, such as alternating-time temporal logic \ATL~\cite{Alur02ATL,Schobbens04ATL} or strategy logic \SL~\cite{Chatterjee07strategylogic,Mogavero10stratLogic}.
For example, the \ATL formula $\coop{taxi}\Sometm\prop{destination}$, built upon the strategic operator $\coop{A}\varphi$ for ``there is a strategy for $A$ such that $\varphi$ holds'' and the temporal modality $\Sometm$ (``eventually''), can be used to express that the autonomous cab can deliver the passenger to his/her destination.
Similarly, $\coop{passg}(\lnot\prop{dead})\Until\prop{destination}$ says that the passenger has the ability to survive the ride alive.
Such statements allow to express important functionality and safety requirements in a simple and intuitive way.
Moreover, they provide input to algorithms and tools for verification, that have been in constant development for over 20 years~\cite{Alur98mocha-cav,Alur01jmocha,Kacprzak04umc-atl,Lomuscio06mcmas,Chen13prismgames,Busard14improving,Pilecki14synthesis,Huang14symbolic-epist,Cermak14mcheckSL,Lomuscio17mcmas,Cermak15mcmas-sl-one-goal,Belardinelli17broadcasting,Belardinelli17broadcasting,Jamroga19fixpApprox-aij,Kurpiewski19stv-demo,Kurpiewski21stv-demo}.

Knowledge and information has always been an important aspect of interaction, but it has become even more important with the emergence of Internet and, more recently, social networks.
Information is an important resource on which strategies are built, e.g., it is widely acknowledged that executable strategies must comply with so called uniformity constraints~\cite{Schobbens04ATL,Jamroga04ATEL}.
More and more often, information becomes also the \emph{goal} of the interaction. Agents may play to \emph{learn} about a particular subject. People strive to know what the state of the economy is, what is the latest clothing fashion, and whether the coffee machine at their workplace serves good espresso or not. Using strategic-epistemic specifications that involve the knowledge operator $K_a$, the latter kind of ability can be expressed by $\coop{worker}\Sometm(K_{worker}\prop{good}\lor K_{worker}\neg\prop{good})$.
Dually, the user of a social network may want to post a message for their friends only, in which case no outsider should learn the content of the message.
This kind of ability can be captured by $\coop{user}\Always\neg(K_{outsider}\prop{post=m}\lor K_{outsider}\prop{post\neq m})$.

In many cases, however, strategic-epistemic specifications are too coarse. It is great if the worker can obtain full knowledge about the quality of workplace espresso, but being \emph{almost sure} is almost as good.
Dually, leaking \emph{some} information about the social network post can be damaging, even if the intruder does not learn its exact content.
With this motivation in mind, we propose to extend alternating-time temporal logic with new, information-theoretic modalities $\uncertain$, based on the Hartley measure of uncertainty~\cite{Hartley28information}.
We also demonstrate the usefulness of the framework on a real-life voting scenario.

In terms of technical results, we prove that the resulting logic has the same expressive power and model checking complexity as strategic-epistemic specifications; however, it is exponentially more succinct.
This is an important result, as it shows that the verification of a given \emph{property} with uncertainty operators can take exponentially less time than when one uses knowledge modalities.

\para{Related work.} Strategic-epistemic reasoning has been intensively studied in the early 2000s, especially within the framework of ATEL~\cite{Hoek02ATEL,Hoek03ATELstudialogica,Hoek03ATELcasestudy,Agotnes06action,Jamroga07constructive-jancl} and Dynamic Epistemic Logic~\cite{Ditmarsch07DEL,Agotnes08coalitions}. Dynamic epistemic planning~\cite{Bolander11epistplanning} is a particularly relevant example.
Still, we are not aware of any works combining logical formalizations of strategic reasoning with information-theoretic properties.
The paper~\cite{Jamroga13accumulative} comes closest, as it discusses the relation between a variant of resource-bounded temporal-epistemic logic and Hartley measure.
Moreover, our proposal is directly inspired by information-theoretic notions of security, cf.~\cite{KatzLindell20crypto} for an introduction.

Another strand of related works concerns quantitative specification and verification of MAS due to stochastic interaction~\cite{Chen07PATL,Huang12probabilisticATL}, graded~\cite{Aminof18gradedSL,Ferrante08enriched-mu} and fuzzy strategic modalities~\cite{Bouyer19fuzzySL,Belardinelli22humanfriendly}, or probabilistic beliefs about the opponents' response~\cite{Bulling09patl-fundamenta}. Those papers considered neither knowledge nor information-theoretic properties, though~\cite{Ferrante08enriched-mu} leaned in that direction by including a count over the accessible imperfect worlds.

Succinctness of logical representations has been studied since early 1970s~\cite{Stockmeyer72phd}.
In particular, the relative succinctness of branching-time logics was investigated in~\cite{Wilke99succinctness,Adler01succinctness,Markey03succinctness}, and~\cite{Bozzelli20ATLwithPast} studied the succinctness of the strategic logic \ATLs with past-time operators.
The methodology of proving succinctness by means of \emph{formula size games} was proposed in~\cite{Adler01succinctness}, and later generalized in~\cite{French13succinctness}. We adapt the latter approach to obtain our main technical result here.

\section{Logics of Strategic Ability}\label{expressing-sec-logics}

We first recapitulate the logical foundations that we chose for our approach.

\subsection{Alternating-Time Logic \ATL}\label{sec:atl}

\emph{Alternating-time temporal logic} \ATL~\cite{Alur97ATL,Alur02ATL,Schobbens04ATL} generalizes the branching-time temporal logic \CTL~\cite{Emerson90temporal} by replacing the path quantifiers $\Epath,\Apath$ with \emph{strategic modalities} $\coop{A}$.
Informally, $\coop{A}\gamma$ says that a group of agents $A$ has a collective strategy to enforce temporal property $\gamma$.
\ATL formulas can include temporal operators: ``$\Next$'' (``in the next state''), ``$\Always$'' (``always from now on''), ``$\Sometm$'' (``now or sometime in the future''),
and $\Until$ (strong ``until'').

\para{Syntax.}
Formally, let $\Agt$ be a finite set of agents, and $\Props$ a countable set of atomic propositions.
The language of \ATL is defined as follows:
\begin{center}
  $\varphi::= \prop{p} \mid \neg \varphi \mid \varphi\wedge\varphi
    \mid \coop{A}\Next\varphi \mid \coop{A}\Always\varphi \mid \coop{A}\varphi\Until\varphi$.
\end{center}
where $A\subseteq\Agt$ and $\prop{p} \in \Props$.
Derived boolean connectives and constants ($\lor,\true,\false$) are defined as usual.
``Sometime'' is  defined as $\Sometm\varphi \equiv \true \Until \varphi$.

\subsection{Semantics of \ATL}

\para{Models.}
The semantics of \ATL is defined over a variant of synchronous multi-agent transition systems.
Let $S = \set{\Agt, \States, \Act, d, \outcome}$ be a concurrent game structure (CGS) such that:
 $\Agt = \set{1,...,k}$ is a set of agents (or players),
 $\States$ is the set of states of the system,
 $\act$ is a set of actions,
 $d: \Agt\times\States \then 2^{\Act}\setminus\set{\emptyset}$ shows what actions are available for each player in each state,
 and $\outcome: \States \times \Act^1 \times ... \times \Act^k \then \States$ is the transition function which, given a state and one action from each player in that state, returns the resulting sate.
 A CGS together with a set of atomic propositions $\atomicP$ and a valuation function $\valualtion: \atomicP \then 2^{\States}$ is called a concurrent game model (CGM).
 A \emph{pointed CGM} is a pair $(M,q_0)$ consisting of a concurrent game model $M$ and an initial state $q_0$ in $M$.

\para{Strategies and their outcomes.}
Given a CGS, we define the strategies and their outcomes as follows.
A \emph{strategy} for $a\in\Agt$ is a function $s_a:\States\rightarrow \Act$ such that $s_a(q)\in d(a,q)$.\footnote{
  This corresponds to the notion of \emph{memoryless} or \emph{positional} strategies. In other words, we assume that the memory of agents is explicitly defined by the states of the model. }
The set of such strategies is sometimes denoted by $\Sigma_a^{Ir}$, with the capital ``I'' referring to perfect \textbf{I}nformation, and the lowercase ``r'' for possibly imperfect \textbf{r}ecall.
A \emph{collective strategy} for a group of agents $A=\set{a_1,\ldots,a_r}$
is a tuple of individual strategies $s_A = \tuple{s_{a_1}, \ldots, s_{a_r}}$.
The set of such strategies is denoted by $\Sigma_A^{Ir}$.

A \emph{path} $\lambda=q_0q_1q_2\dots$ in a CGS is an infinite sequence of states
such that there is a transition between each $q_i$ and $q_{i+1}$.
$\lambda[i]$ denotes the $i$th position on $\lambda$ (starting from $i=0$) and $\lambda[i,\infty]$ the suffix of $\lambda$ starting with $i$.
The ``outcome'' function $out(q,s_A)$ returns the set of all paths that
can occur when agents $A$ execute strategy $s_A$ from state $q$ onward, defined as follows:

\begin{description}
  \item[$out(q,s_A) =$]  $\{ \lambda=q_0,q_1,q_2\ldots \mid
      q_0=q$ and for each $i=0,1,\ldots$ there exists
    $\tuple{\alpha^{i}_{a_1},\ldots,\alpha^{i}_{a_k}}$ such that
    $\alpha^{i}_{a} \in d_a(q_{i})$ for every $a\in \Agt$,
    and $\alpha^{i}_{a} = s_A[a](q_{i})$ for every $a\in A$,
    and $q_{i+1} = o(q_{i},\alpha^{i}_{a_1},\ldots,\alpha^{i}_{a_k}) \}$.
\end{description}

\para{Semantic clauses.}
The semantics of \ATL is defined by the following clauses:
\begin{description2}
  \item[$\model,q\state \models \prop{p}$] iff  $q \in V(\prop{p})$, for $\prop{p}\in\Props$;
  \item[$\model,q\state \models \neg\varphi$] iff $\model,\state \not\models  \varphi$;
  \item[$\model,q\state \models \varphi_1\wedge\varphi_2$] iff $\model,q \models \varphi_1$ and $\model,q \models \varphi_2$;
\item[$\model,q \models \coop{A}\Next\varphi$] iff there is a strategy $s_A\in\Sigma_A^{Ir}$
  such that, for each path $\lambda\in out(q,s_A)$, we have $\model,\lambda[1] \models \varphi$.
  \item[$\model,q \models \coop{A}\Always\varphi$] iff there is a strategy $s_A\in\Sigma_A^{Ir}$
  such that, for each path $\lambda\in out(q,s_A)$, we have $\model,\lambda[i] \models \varphi$ for all $i\geq 0$.
  \item[$\model,q \models \coop{A}\varphi_1\Until\varphi_2$] iff there is a strategy $s_A\in\Sigma_A^{Ir}$
  such that, for each path $\lambda\in out(q,s_A)$, we have
  $\model,\lambda[i] \models \varphi_2$ for some $i\geq 0$ and $\model, \lambda[j] \models \varphi_1$
  for all $0\leq j< i$.
\end{description2}

\subsection{Imperfect Information and Knowledge}\label{sec:imperfect}

Realistic multi-agent interaction always includes some degree of limited observability. Here, we use the classical variant of ``\ATL with imperfect information'', defined as follows:

We extend concurrent game structures with indistinguishability relations $\indist_a$, for each $a\in\Agt$.
The resulting structure $S = \set{\Agt, \States, \set{\indist_a \mid a\in\Agt}, \Act, d, \outcome}$ is called a concurrent epistemic game structure (CEGS).
A CEGS together with a set of atomic propositions $\atomicP$ and a valuation function $\valualtion: \atomicP \then 2^{\States}$ is called a concurrent epistemic game model CEGM.

Strategies of agents must specify identical choices in indistinguishable situations. That is,
 strategies with imperfect information (\textit{ir} strategies, for short) are functions $s_a : \States\to\Actions$ such that (1) $s_a(q)\in d(a,q)$, and (2) if $q\sim_a q'$ then $s_a(q)=s_a(q')$.
 \footnote{Again, we consider only positional strategies here.}
As before, collective strategies for $A\subseteq\Agt$ are tuples of individual strategies for $a\in A$. We denote the set of $A$'s imperfect information strategies by $\Sigma_A^{ir}$ (with the lowercase ``i'' for imperfect \textbf{i}nformation).

The semantics of ``\ATL with imperfect information'' ($\textbf{ATL}_{ir}$) differs from the one presented in Section~\ref{sec:atl} only in that the strategies are taken from $\Sigma_A^{ir}$ instead of $\Sigma_A^{Ir}$.
 In other words, the agents in $A$ should have an executable strategy which enforces $\varphi$ from all the states that at least one member of the coalition considers possible.

\textit{Alternating-time temporal epistemic logic} \ATEL adds the knowledge modality of the \textit{multi-agent epistemic logic} to \ATL with imperfect information.
 In multi-agent epistemic logic, expressing the \emph{knowledge} of the agents is formalised by epistemic formulae of type $\knows_a \varphi$, stating that agent $a$ \emph{knows that $\varphi$ holds}, with the following semantics:
\begin{description2}
  \item $\model,q \models \knows_a \varphi$\quad iff, for every state $q'$ such that $q\sim_a q'$, we have that $\model,q' \models \varphi$.
\end{description2}

The formula stating mutual knowledge $E_A\varphi$ (read ''everybody in $A$ knows that $\varphi$'') is defined as:
\begin{center}
  $M,q\models E_A\varphi$ iff $M,q\models \knows_a \varphi$, for all $a\in A$.
\end{center}

\section{Motivating Example}\label{sec:motivating}

In this section we show, using an example, how our proposed logic can express more refined epistemic goals of agents using considerably more concise formulas.
As we will see, not only the new formulation of these epistemic properties will be significantly shorter; the interpretation of the formulas will also be easier to understand in comparison to their analogous formulas in \ATLK.

\subsection{Coercion in Referendums}

We consider a very simple scenario of an election with a single voter and a single coercer. The election is a referendum, in the sense that each voter has to either vote for an issue in question
or to vote against it. We consider two variants. In the first one there is only one issue put for referendum (we call it proposal \textit{A}). The model consists of two agents, the voter $v$ and the coercer $c$.
The set of possible actions for the coercer in the model is $\set{\epsilon}$ and for the voter is $\set{voteA, vote\overline{A}, \epsilon}$, where $\epsilon$ represents a \textit{null}
action (meaning the action of doing nothing). $voteA$ and $vote\overline{A}$ respectively represent voting for and against the proposal $A$. The atomic proposition $V_A$
states that the vote is cast in favor of proposal $A$, while the atomic proposition $Voted$ shows that the vote is already cast. The game model is depicted in Figure~\ref{fig:voterExample1}.

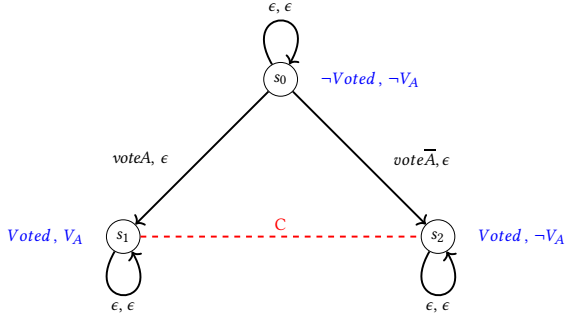
\begin{figure}[t]\centering
\resizebox{0.9\columnwidth}{!}{\begin{tikzpicture}[node distance = 4cm]
      \node[draw, shape = circle] (s0) at (0,0) {$s_0$} ;
      \node[draw, shape = circle] (s1) [below left of= s0] {$s_1$};
      \node[draw, shape = circle] (s2) [below right of= s0] {$s_2$};

      \node[right = 0.3cm of s0, blue] {$\lnot Voted\,, \, \lnot V_A$};
      \node[left = 0.3cm of s1, blue] {$Voted\,, \, V_A$};
      \node[right = 0.3cm of s2, blue] {$Voted\,, \, \lnot V_A$};

      \path[->, line width = 1 ] (s0) edge
      node[left, text width = 1.5cm] {\textit{voteA}, $\epsilon$}
      (s1);

      \path[->, line width = 1 ] (s0) edge [loop above, looseness=10,out=120, in = 60]
      node[above] {$\epsilon$, $\epsilon$}
      (s0);

      \path[->, line width = 1 ] (s0) edge
      node[right = 0.5cm, text width = 2cm] {$  vote\overline{A} , \epsilon$}
      (s2);

      \path[dashed, line width = 1, red ] (s1) edge
      node[above] {C}
      (s2);

      \path[->, line width = 1 ] (s1) edge [loop above, looseness=10,out=240, in = 300]
      node[below] {$\epsilon$, $\epsilon$}
      (s1);

      \path[->, line width = 1 ] (s2) edge [loop above, looseness=10,out=240, in = 300]
      node[below] {$\epsilon$, $\epsilon$}
      (s2);
    \end{tikzpicture}  }
\caption{Single issue referendum with one voter and one coercer}\label{fig:voterExample1}
\end{figure}

The valuations of atomic propositions are depicted in blue, and the red dashed line between $s_1$ and $s_2$ shows that the these two states are indistinguishable for
player $c$. In this simple model, we might express the property of coercion resistance in \ATLK as follows:
  \begin{align*}
    M, s_0 & \models \coop{v}\Sometm(Voted \wedge V_A \wedge \Always\lnot(\knows_c V_A \vee \knows_c \lnot V_A)) \\
    \wedge & \coop{v}\Sometm(Voted \wedge \lnot V_A \wedge \Always\lnot(\knows_c V_A \vee \knows_c \lnot V_A))
  \end{align*}

The formula states that the voter has a strategy to vote for the proposal $A$ or against it, in a way that in either case the coercer does not know the value of the vote.

\subsection{Referendum with Multiple Proposals}

Consider a more sophisticated variant in which the voter participates in a double referendum, i.e., votes on two proposals $A$ and $B$ on a single ballot.
The set of atomic propositions in this scenario is $\set{Voted, V_A, V_B}$ and the set of actions for the voter is\linebreak
$\set{voteAB, vote\overline{AB}, voteA\overline{B}, vote\overline{B}A}$. For expressing the property of coercion resistance in this scenario, a seemingly reasonable way
is to extend the above formulas as below:
  \begin{align*}
    M, s_0 & \models \coop{v}\Sometm(Voted \wedge V_A \wedge V_B \wedge                                       \\
           & \qquad \Always\lnot(\knows_c V_A \vee \knows_c \lnot V_A \vee \knows_c V_B \vee \knows_c \lnot V_B)) \\
           & \wedge \coop{v}\Sometm(Voted \wedge V_A \wedge \lnot V_B \wedge                                  \\
           & \qquad \Always\lnot(\knows_c V_A \vee \knows_c \lnot V_A \vee \knows_c V_B \vee \knows_c \lnot V_B)) \\
           & \wedge \coop{v}\Sometm(Voted \wedge \lnot V_A \wedge V_B \wedge                                  \\
           & \qquad \Always\lnot(\knows_c V_A \vee \knows_c \lnot V_A \vee \knows_c V_B \vee \knows_c \lnot V_B)) \\
           & \wedge \coop{v}\Sometm(Voted \wedge \lnot V_A \wedge \lnot V_B \wedge                            \\
           & \qquad \Always\lnot(\knows_c V_A \vee \knows_c \lnot V_A \vee \knows_c V_B \vee \knows_c \lnot V_B))
  \end{align*}

The formula states that the voter can vote in any combination, for or against $A$ or $B$, without the coercer knowing the value of any single vote. At the first glance these security properties
seem to be strong enough for capturing the desirable property of coercion resistance. However, if we look at the two models in Figure~\ref{fig:voterExample2}, both of
them satisfy the property above. On the other hand, we would consider model $M_1$ less secure than $M_2$.
There are 4 possible combinations of the valuations of $V_A$ and $V_B$. In $M_2$, the coercer considers all 4 of them as plausible, but in $M_1$ he could narrow that down to only two possible combinations.
In other words, the uncertainty of the coercer about propositions $V_A$ and $V_B$ is higher in $M_2$ than in $M_1$. In fact, as we shall see later, it is possible
to write a formula in the language of $\ATLK$ that keeps the above property and yet distinguishes $M_1$ and $M_2$. But if we want to write a security property in \ATLK that rejects all the models where
the coercer has more distinguishing power over states $s_1$ to $s_4$ than the model $M_2$, then the length of that formula would be very large -- in the worst case, even exponential in the number of distinguishing properties.

\begin{figure}[t]
  \begin{subfigure}[b]{0.5\textwidth}
    \scalebox{0.7} {
      \begin{tikzpicture}[node distance = 3cm]
        \node[draw, shape = circle] (s0) at (0,0) {$s_0$} ;
        \node[draw, shape = circle] (s2) [below left of= s0] {$s_2$};
        \node[draw, shape = circle] (s3) [below right of= s0] {$s_3$};
        \node[draw, shape = circle] (s1) [left of= s2] {$s_1$};
        \node[draw, shape = circle] (s4) [right of= s3] {$s_4$};

        \node[above right = -0.3cm and 0.3cm of s0, blue] {$\lnot Voted\,, \, \lnot V_A\, , \, \lnot V_B$};
        \node[above = 0.3cm of s1, blue] {$Voted\,, \, V_A\, , \, \lnot V_B$};
        \node[below right = 0.0 cm and 0.1cm of s2, blue] {$Voted\,, \, \lnot V_A\, , \, V_B$};
        \node[below right = 0.0 cm and 0.1cm of s3, blue] {$Voted\,, \, V_A\, , \, V_B$};
        \node[above = 0.3cm of s4, blue] {$Voted\,, \, \lnot V_A\, , \, \lnot V_B$};

        \path[->, line width = 1 ] (s0) edge [loop above, looseness=10,out=120, in = 60]
        node[above] {$\epsilon$, $\epsilon$}
        (s0);

        \path[->, line width = 1 ] (s0) edge
        node[above left = 0.2cm and -0.5cm, text width = 1.5cm] {$voteA\overline{B}\, ,\, \epsilon$}
        (s1);

        \path[->, line width = 1 ] (s0) edge
        node[below right = 0.2cm and -0.4cm, text width = 1.5cm] {$vote\overline{A}B\, ,\, \epsilon$}
        (s2);

        \path[->, line width = 1 ] (s0) edge
        node[above left = -0.2cm and 0.15cm] {$voteAB\, ,\, \epsilon$}
        (s3);

        \path[->, line width = 1 ] (s0) edge
        node[above right = 0.2 and -0.3cm] {$vote\overline{AB}\, ,\, \epsilon$}
        (s4);

        \path[dashed, line width = 1, red ] (s1) edge
        node[above] {C}
        (s2);

        \path[dashed, line width = 1, red ] (s3) edge
        node[above] {C}
        (s4);

        \path[->, line width = 1 ] (s1) edge [loop above, looseness=10,out=240, in = 300]
        node[below] {$\epsilon$, $\epsilon$}
        (s1);

        \path[->, line width = 1 ] (s2) edge [loop above, looseness=10,out=240, in = 300]
        node[below] {$\epsilon$, $\epsilon$}
        (s2);

        \path[->, line width = 1 ] (s3) edge [loop above, looseness=10,out=240, in = 300]
        node[below] {$\epsilon$, $\epsilon$}
        (s3);

        \path[->, line width = 1 ] (s4) edge [loop above, looseness=10,out=240, in = 300]
        node[below] {$\epsilon$, $\epsilon$}
        (s4);
      \end{tikzpicture}
    }
    \caption{$M_1$} \label{fig:M1}
  \end{subfigure}

  \vspace{0.5cm}

  \begin{subfigure}[b]{0.5\textwidth}
    \scalebox{0.7} {
      \begin{tikzpicture}[node distance = 3cm]
        \node[draw, shape = circle] (s0) at (0,0) {$s_0$} ;
        \node[draw, shape = circle] (s2) [below left of= s0] {$s_2$};
        \node[draw, shape = circle] (s3) [below right of= s0] {$s_3$};
        \node[draw, shape = circle] (s1) [left of= s2] {$s_1$};
        \node[draw, shape = circle] (s4) [right of= s3] {$s_4$};

        \node[above right = -0.3cm and 0.3cm of s0, blue] {$\lnot Voted\,, \, \lnot V_A\, , \, \lnot V_B$};
        \node[above = 0.3cm of s1, blue] {$Voted\,, \, V_A\, , \, \lnot V_B$};
        \node[below right = 0.0 cm and 0.1cm of s2, blue] {$Voted\,, \, \lnot V_A\, , \, V_B$};
        \node[below right = 0.0 cm and 0.1cm of s3, blue] {$Voted\,, \, V_A\, , \, V_B$};
        \node[above = 0.3cm of s4, blue] {$Voted\,, \, \lnot V_A\, , \, \lnot V_B$};

        \path[->, line width = 1 ] (s0) edge [loop above, looseness=10,out=120, in = 60]
        node[above] {$\epsilon$, $\epsilon$}
        (s0);

        \path[->, line width = 1 ] (s0) edge
        node[above left = 0.2cm and -0.5cm, text width = 1.5cm] {$voteA\overline{B}\, ,\, \epsilon$}
        (s1);

        \path[->, line width = 1 ] (s0) edge
        node[below right = 0.2cm and -0.4cm, text width = 1.5cm] {$vote\overline{A}B\, ,\, \epsilon$}
        (s2);

        \path[->, line width = 1 ] (s0) edge
        node[above left = -0.2cm and 0.15cm] {$voteAB\, ,\, \epsilon$}
        (s3);

        \path[->, line width = 1 ] (s0) edge
        node[above right = 0.2 and -0.3cm] {$vote\overline{AB}\, ,\, \epsilon$}
        (s4);

        \path[dashed, line width = 1, red ] (s1) edge
        node[above] {C}
        (s2);

        \path[dashed, line width = 1, red ] (s2) edge
        node[above] {C}
        (s3);

        \path[dashed, line width = 1, red ] (s3) edge
        node[above] {C}
        (s4);

        \path[->, line width = 1 ] (s1) edge [loop above, looseness=10,out=240, in = 300]
        node[below] {$\epsilon$, $\epsilon$}
        (s1);

        \path[->, line width = 1 ] (s2) edge [loop above, looseness=10,out=240, in = 300]
        node[below] {$\epsilon$, $\epsilon$}
        (s2);

        \path[->, line width = 1 ] (s3) edge [loop above, looseness=10,out=240, in = 300]
        node[below] {$\epsilon$, $\epsilon$}
        (s3);

        \path[->, line width = 1 ] (s4) edge [loop above, looseness=10,out=240, in = 300]
        node[below] {$\epsilon$, $\epsilon$}
        (s4);
      \end{tikzpicture}  }
    \caption{$M_2$} \label{fig:M2}
  \end{subfigure}
  \caption{Double referendum with one voter and one coercer. Model $M_1$ depicts a scenario which is less secure than $M2$}
  \label{fig:voterExample2}
\end{figure}
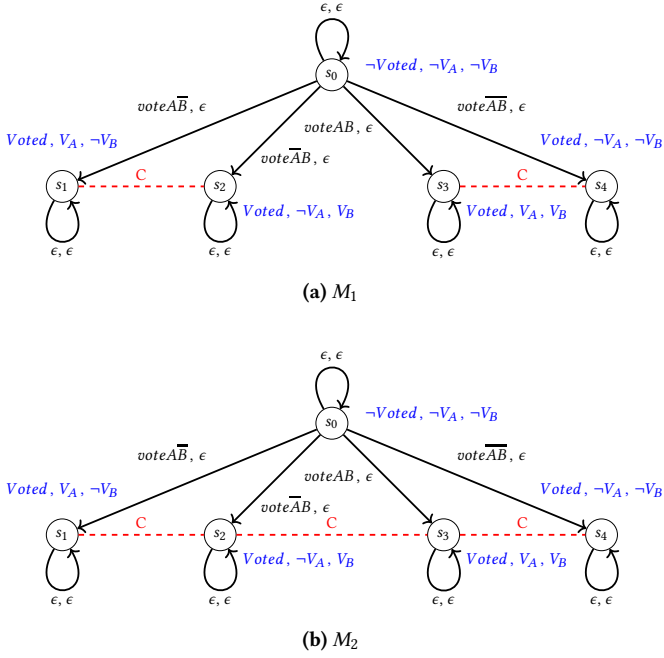

\subsection{Reasoning about Uncertainty}\label{sec:uncertainty}

One way of looking at the above situation is that, when reaching any of the states $s_1$ to $s_4$, we want the coercer to have the least amount of information, or in other words the maximum uncertainty
about the possible values of $V_A$ and $V_B$. To express this concept, we can use one of the well known quantitative measures of uncertainty. Two measures that come to mind are Shannon entropy and Hartley measure.
Choosing Shannon entropy would be meaningful only if we knew the intrinsic probabilities of each state. However, in the models that we are using, and in the scenarios similar
to the one above, what we are interested is the uncertainty of the agents about different possible outcomes of a set of properties (here $V_A$ and $V_B$).
We recall the definition of Hartley measure below:

\begin{definition}[Hartley measure of uncertainty~\cite{Hartley28information}]\label{def:hartley}
  If $A$ is a set of possible outcomes, then its Hartley measure is defined by $H(A) = \log |A|$.
\end{definition}

The Hartley function coincides with Shannon entropy when ignorance can be modeled by the uniform probability distribution.
Using this measure, what we want to specify as a security property in the example above is that the uncertainty of the coercer about the values of $V_A$ and $V_B$ should be maximal.
The set of properties of interest $\set{V_A, V_B}$ could have $2^2=4$ different combinations of values.
Therefore if we want that the coercer considers all of these combinations as possible, the Hartley measure of uncertainty of the coercer would be $\log 4 = 2$ bits.
To express this, we add a new operator $\uncertain$, and write the formula:
\begin{center}
  $\coop{v}\Sometm(Voted \wedge \uncertain^{\ge 2}_c\set{V_A,V_B})$
\end{center}

The formula states that the voter has a strategic ability to eventually cast her vote, while keeping the uncertainty of the coercer about the
valuations of $V_A$ and $V_B$ at the level of at least 2 bits.
Intuitively, the formula holds in state $s_1$ of model $M_2$, but not $M_1$.

In the next section, we use this idea to formalize the syntax and semantics of the logic $\ATLH$.

\section{\ATL with Uncertainty}\label{sec:athl}

In this section we define the syntax and semantics of the logic of strategic abilities with uncertainty operator $\ATLH$.
The logic is based on the idea of using the Hartley measure to quantify
the uncertainty of agents about the possible valuations of a set of formulas.
Similarly to $\ATLK$, the semantics of \ATLH is also defined with respect to concurrent epistemic game models (CEGM).

\subsection{Syntax}

The syntax of $\ATLH$ is given as follows:
\begin{center}
  $\varphi::= \prop{p} \mid \neg \varphi \mid \varphi\wedge\varphi
    \mid \coop{A}\Next\varphi \mid \coop{A}\Always\varphi \mid \coop{A}\varphi\Until\varphi \mid \uncertain_{a}^{\otimes\,m}\beta$.
\end{center}
where $A\in 2^{\Agt}$ is a set of players, $\beta\in 2^{\varphi}\setminus \set{\emptyset}$ is a set of formulas,
$a\in\Agt$ is a player, and $\otimes\in\set{<, \le, >, \ge, =}$ is a comparison operator.
For instance, the formula $\uncertain_{a}^{>m}\beta$ states that the the uncertainty of agent $a$ about the set of formulas $\beta$ is higher than $m$.

\subsection{Semantics}

Let $[q]_{\indist_a} = \set{q'\in\States\ \mid\ q'\indist_a q}$ denote the abstraction class of state $q\in\States$ with respect to relation $\indist_a$, i.e., the epistemic neighbourhood of $q$ from the perspective of agent $a\in\Agt$.
For a given formula $\varphi$, we define relation $\indist^{\varphi}\in \States\times\States$ that connects states with the same valuation of $\varphi$:
\begin{center}
  $q_1\sim^\varphi q_2\ \text{iff}\ M,q_1\models\varphi \Leftrightarrow M,q_2\models\varphi$.
\end{center}
If $\beta=\set{\varphi_1, ..., \varphi_n}$ is a set of formulas and $a\in\Agt$, then we define
\begin{center}
  $\sim^\beta_a = \sim_a \cap \bigcap^n_{i=1}\sim^{\varphi_i}$
\end{center}
I.e., $q_1 \sim^\beta_a q_2$ iff $q_1,q_2$ look the same to $a$ and cannot be discerned by any formula in $\beta$.
Note that $\sim^\beta_a$ is an equivalence.
We define
\begin{center}
  $\mathit{R}_{a,q}(\beta) = \set{[q']_{\sim_a^{\beta}}\ \mid\ q'\sim_a q}$
\end{center}
for the set of equivalence classes of $\sim_a^{\beta}$ contained in the epistemic neighbourhood of state $q$.
Then, the truth value of the statement ``agent $a$'s uncertainty about the set of formulas $\beta$ is in $\otimes$ relation to value $m\in\mathbb{R}$'' can be defined as follows:

\smallskip
\begin{center}
  $M,q\models \uncertain^{\otimes m}_a \beta,\  \text{iff}\ \log\vert\mathit{R}_{a,q}(\beta)\vert \otimes m$
\end{center}

\smallskip
Some straightforward validities of \ATLH are:
\begin{enumerate}
\item $\uncertain_a^{\substack{=m\\\geqslant m\\>m}}\beta \rightarrow \uncertain_a^{\substack{\geqslant m\\\geqslant m\\>m}}\beta'$,\quad for all $\beta\subseteq\beta'$;

\smallskip
\item $\uncertain_a^{\substack{<m\\\leqslant  m\\=m}}\beta \rightarrow \uncertain_a^{\substack{<m\\\leqslant  m\\\leqslant m}}\beta'$,\quad for all $\beta\subseteq\beta'$.
\end{enumerate}

\smallskip
Also, if $|\States|$ is the number of states in the model, then it holds that $M,q\models \uncertain_a^{<\min (|\beta|, \log (|\States|))}\beta$.

\subsection{Model Checking}\label{sec:mcheck}

In this section, we discuss model checking for \ATLH. The following results have long been known in the literature:
\begin{itemize}
    \item Model checking of epistemic logic is in \textbf{P} with respect to the size of the model and the length of the formula~\cite{halpern1991model}.
    \item Model checking of \ATLK for agents with \textit{ir} strategies is $\Deltwo$-complete with respect to size of the model and the length of the formula~\cite{Bulling10verification}.
    This is a direct consequence of the fact that model checking of $\textbf{ATL}_{ir}$ is \Deltwo-complete~\cite{Schobbens04ATL,Jamroga06atlir-eumas}.
\end{itemize}

In the following, we show that model checking of \ATLH is also \Deltwo-complete. To this end, it suffices to show that model checking of the uncertainty part of the language is in \textbf{P}.

\begin{proposition}\label{prop:mcheck-unc}
    If $\varphi$ is an \ATLH formula without strategic and temporal operators and $M$ is a CEGM,
    then checking if $\varphi$ is satisfied in a state $q$ of $M$ can be done in polynomial time with respect to $|\varphi|$ and $|M|$, where $|M|$ is the total number of states, transitions, and epistemic relation pairs in $M$.
\end{proposition}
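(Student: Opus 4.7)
The plan is to use the standard bottom-up labelling algorithm familiar from \CTL and epistemic model checking: process the subformulas of $\varphi$ in order of increasing size and compute, for every state $q$ of $M$, whether $q$ satisfies the current subformula. For atomic propositions and boolean connectives this is routine and costs $O(|M|)$ per subformula; the only new case is the uncertainty operator $\uncertain_a^{\otimes m}\beta$.

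First I would observe that, when the algorithm reaches $\uncertain_a^{\otimes m}\beta$, the truth value of every $\varphi_i \in \beta$ has already been tabulated at every state. So for each $q \in \States$ one can read off, in time $O(|\beta|)$, the bit-vector $v(q) = (\truthvalue{\varphi_1}{q}, \ldots, \truthvalue{\varphi_k}{q})$, where $\beta = \set{\varphi_1, \ldots, \varphi_k}$. By definition of $\sim_a^\beta$, two states $q_1,q_2$ lie in the same $\sim_a^\beta$-class iff $q_1 \sim_a q_2$ and $v(q_1) = v(q_2)$.

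Next I would, for every state $q$, compute $|\mathit{R}_{a,q}(\beta)|$ as follows: enumerate the epistemic neighbourhood $[q]_{\sim_a}$ (readable from $M$ in time $O(|M|)$), collect the vectors $v(q')$ for $q' \in [q]_{\sim_a}$, and count how many distinct vectors appear, e.g.\ by lexicographic sorting or hashing in time $O(|M|\cdot|\beta|)$. Then compare $\log|\mathit{R}_{a,q}(\beta)|$ with $m$ via $\otimes$ in constant time. Labelling all states for this single occurrence of the uncertainty operator therefore costs $O(|M|^2 \cdot |\beta|)$, which is polynomial in $|M|$ and $|\varphi|$. Summing over the at most $|\varphi|$ subformulas of $\varphi$ gives the required polynomial bound.

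The main obstacle, if any, is conceptual rather than technical: one has to notice that $\beta$ is a \emph{set} of (possibly nested) subformulas rather than a single formula, so the uncertainty operator can refer back to arbitrarily complex subformulas. Because the bottom-up schedule has already produced the truth tables of those subformulas before the operator is handled, this costs only an extra factor of $|\beta| \leq |\varphi|$; no clever data structure beyond sorting or hashing of bit-vectors is needed.
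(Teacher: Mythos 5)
Your proposal is correct and follows essentially the same route as the paper: a bottom-up labelling of subformulas in order of increasing size, with the uncertainty operator handled by counting the distinct valuation combinations of the members of $\beta$ within the epistemic neighbourhood $[q]_{\sim_a}$ (which, since $\sim_a$ is an equivalence, is exactly $|\mathit{R}_{a,q}(\beta)|$). Your bit-vector/sorting formulation is just a more explicit rendering of the paper's class-counting step, and your complexity bound matches the paper's $O(|\varphi|^2\,|M|^2)$ up to the same polynomial order.
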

\begin{proof}
    Let $\varphi_1, \varphi_2, ...\varphi_k$ be the subformulas of $\varphi$ (which incrementally generate $\varphi$) listed in order of length. We can see that $k\leq |\varphi|$, as there are at most $|\varphi|$ subformulas of $\varphi$.
    We start labeling each state in $M$ in increasing order of $i$, with labels $\varphi_i$ or $\lnot\varphi_i$, depending on whether $\varphi_i$ is true in that state or not.
    It is easy to see that we can do this in at most $\mathbf{O}(k |M|)$ labeling step.
    If the formula $\varphi_i$ is a propositional formula with respect to it's subformulas, then it can be labeled in in each state in constant time.
    In cases where $\varphi_i$ is of the form $\uncertain_a^{\otimes m}\beta$ where $\otimes\in\set{<, >, =}$ and $\beta = \set{\alpha_1, ...\alpha_{k'}}$, we have that each $\alpha_j$ is a subformula of $\varphi_i$.
    Therefore for labeling $\varphi_i$ we construct the set of equivalence classes $\mathit{R}_{a,q}(\beta)$ by checking the $k'$ labels of formulas in $\beta$ in all the states $q'$ where $q' \indist_a q$.
    Then we calculate $\log\vert\mathit{R}_{a,q}(\beta)\vert$ and compare it with $m$ in order to label $\varphi_i$.
    This procedure can be done in at most $\mathbf{O}(k' |M|)$ steps.
    Therefore the whole process of checking whether $\varphi$ is satisfied in a state $q$ or not can be done in at most $\mathbf{O}({\vert\varphi\vert}^2 {|M|}^2)$.
\end{proof}

\begin{proposition}
    Model checking of \ATLH for agents with \textit{ir} strategies is \Deltwo-complete with respect to size of the model and the length of the formula.
\end{proposition}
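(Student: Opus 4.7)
The plan is to establish the two directions of $\Deltwo$-completeness separately, leveraging the results already cited in the excerpt.

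For hardness, I would observe that \ATL is syntactically a fragment of \ATLH (the uncertainty operator is simply not used). Therefore any \ATL model checking instance is also an \ATLH model checking instance on the same CEGM (with the indistinguishability relations chosen arbitrarily, e.g., as identity). Since model checking of $\textbf{ATL}_{ir}$ is already known to be \Deltwo-hard~\cite{Schobbens04ATL,Jamroga06atlir-eumas}, hardness of \ATLH model checking follows immediately by this trivial reduction.

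For membership, I would use the standard bottom-up labeling algorithm for branching-time-style logics. Enumerate the subformulas $\varphi_1, \ldots, \varphi_k$ of the input formula $\varphi$ in non-decreasing length, where $k \le |\varphi|$. Process them in order: each time, we label every state of $M$ with $\varphi_i$ or $\neg \varphi_i$, relying on the labels already computed for the proper subformulas of $\varphi_i$. There are three cases for the top operator of $\varphi_i$. If it is Boolean, labeling is done in constant time per state. If it is a strategic modality $\coop{A}\Next, \coop{A}\Always$, or $\coop{A}\Until$ applied to already-labeled subformulas, then the labeling reduces to a single $\textbf{ATL}_{ir}$ step over a model whose valuation has been enriched with fresh propositions representing the subformula labels; by the cited result, this can be performed with a polynomial number of calls to an \NP{} oracle, hence in \Deltwo. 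If the top operator is the uncertainty modality $\uncertain_a^{\otimes m}\beta$, then since every element of $\beta$ is a proper subformula already labeled, the labeling reduces to a purely propositional–epistemic check, which by Proposition~\ref{prop:mcheck-unc} can be done in polynomial time.

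Altogether, the algorithm performs at most $|\varphi|$ rounds, each round making either a polynomial-time computation or a \Deltwo{} model-checking call on a polynomially enlarged model. A polynomial number of \Deltwo{} computations stays within \Deltwo, so the overall procedure runs in \Deltwo. Combined with the hardness argument, this yields \Deltwo-completeness.

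The main (mild) obstacle is being careful about the bookkeeping in the inductive step: we must justify that replacing each already-labeled subformula by a fresh atomic proposition produces an equivalent $\textbf{ATL}_{ir}$ instance of size polynomial in $|\varphi|$ and $|M|$, and that the extra epistemic structure inherited by the CEGM does not interfere with the strategic-operator step (it does not, because strategic operators in \ATLH inspect only the transition structure and the action-availability function, exactly as in $\textbf{ATL}_{ir}$).
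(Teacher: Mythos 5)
Your proposal is correct and follows essentially the same route as the paper: hardness via the trivial embedding of $\textbf{ATL}_{ir}$ into \ATLH, and membership by combining the polynomial-time evaluation of uncertainty subformulas (Proposition~\ref{prop:mcheck-unc}) with the known \Deltwo{} algorithm for $\textbf{ATL}_{ir}$ in a bottom-up labeling scheme. The paper states the upper bound in one line as ``straightforward''; you merely spell out the bookkeeping it leaves implicit.
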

\begin{proof}
The lower bound follows from the fact that \ATLH subsumes $\textbf{ATL}_{ir}$, and model checking $\textbf{ATL}_{ir}$ is \Deltwo-hard.
The upper bound is straightforward from Proposition~\ref{prop:mcheck-unc} and the fact that model checking $\textbf{ATL}_{ir}$ is in \Deltwo.
\end{proof}

\section{Expressive Power of \ATLH}\label{sec:expressivity}

In this section we show that $\ATLH$ and $\ATEL$ have the same expressive power. We start by recalling the semantic definition of comparative expressivity~\cite{Wang09expressive}.

\begin{definition}[Expressivity]\label{def:expressivity}
  Let $L_1 = \langle \varPhi_1, \models_1, \mathbb{M} \rangle$ and $L_1 = \langle \varPhi_2, \models_2, \mathbb{M}\rangle$ represents two logics,
  such that $\varPhi_1$ and $\varPhi_2$ are the set of formulas defined in these logics, $\mathbb{M}$ is a nonempty class of models (or pointed models) over which the logics are defined,
  and $\models_1$ and $\models_2$ are the truth relations of these logics, such that $\models_1 \subseteq\mathbb{M}\times\varPhi_1$ and  $\models_2 \subseteq\mathbb{M}\times\varPhi_2$.
  We say that $L_2$ is at least as expressive as $L_1$ on the class of models $\mathbb{M}$, iff for every formula $\varphi_1\in\varPhi_1$, there exists a formula $\varphi_2\in\varPhi_2$
  such that for every $M\in\mathbb{M}$ we have $M\models_1 \varphi_1$ iff $M\models_2\varphi_2$. We will write it as $L_1\leqslant_\mathbb{M}L_2$.
\end{definition}

If both $L_1\leqslant_\mathbb{M}L_2$ and $L_2\leqslant_\mathbb{M}L_1$, then we say that $L_1$ and $L_2$ are equally expressive on $\mathbb{M}$, and write $L_1=_\mathbb{M}L_2$.

In the following, we use $\models_{\knows}$ and $\models_\uncertain$ to denote the semantic relation of $\ATEL$ and $\ATLH$, respectively, whenever it might not be clear from the context.

\subsection{Knowledge as Uncertainty}

\begin{theorem}
  $\ATLH$ is at least as expressive as $\ATEL$.
\end{theorem}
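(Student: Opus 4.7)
The plan is to exhibit a translation $\tau \colon \ATLK \to \ATLH$ that preserves truth in every CEGM, and then to verify its correctness by a routine structural induction. All the $\ATLK$ connectives that are already part of $\ATLH$ (Boolean connectives, strategic temporal operators, and atomic propositions) translate homomorphically, so the only interesting clause is the one for the knowledge modality $\knows_a$.

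The key observation is that $\knows_a\varphi$ can be captured by combining truth of $\varphi$ at the current state with the statement that $a$ has zero Hartley uncertainty about $\varphi$. Concretely, I would set
\begin{center}
$\tau(\knows_a\varphi)\ \DEF\ \tau(\varphi)\,\wedge\,\uncertain_a^{=0}\{\tau(\varphi)\}.$
\end{center}
The intuition is that $\uncertain_a^{=0}\{\psi\}$ holds at $q$ iff $\log|R_{a,q}(\{\psi\})|=0$, i.e.\ every state in $[q]_{\indist_a}$ gives $\psi$ the same truth value. Combining this with $\psi$ being true at $q$ forces $\psi$ to be true throughout $[q]_{\indist_a}$, which is exactly $\knows_a\psi$. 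For the converse, if $\knows_a\psi$ holds then $\psi$ is true at $q$ and at every $a$-indistinguishable state, so the partition $R_{a,q}(\{\psi\})$ consists of a single class and $\uncertain_a^{=0}\{\psi\}$ holds.

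With the translation in hand, I would prove by induction on the structure of $\varphi\in\ATLK$ that, for every CEGM $M$ and every state $q$, $M,q\models_\knows \varphi$ iff $M,q\models_\uncertain \tau(\varphi)$. The atomic and Boolean cases are immediate; the strategic-temporal cases follow because $\tau$ commutes with those operators and the induction hypothesis gives that the same set of states satisfies $\varphi_i$ and $\tau(\varphi_i)$, so the same strategies witness both sides; the knowledge case is the equivalence just argued, appealing to the induction hypothesis to replace $\varphi$ by $\tau(\varphi)$ inside the Hartley operator. Applying Definition~\ref{def:expressivity} with $\mathbb{M}$ taken to be the class of CEGMs then yields $\ATLK\leqslant_{\mathbb{M}}\ATLH$.

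There is essentially no hard step: once the equivalence $\knows_a\varphi\ \equiv\ \varphi \wedge \uncertain_a^{=0}\{\varphi\}$ is spotted, the induction is entirely mechanical. The only thing worth being careful about is that the translation is well-defined even when $\varphi$ itself contains nested epistemic operators, which is automatic because $\tau$ is defined by recursion on formula structure and $\uncertain$ is permitted to range over arbitrary $\ATLH$ formulas in $\beta$.
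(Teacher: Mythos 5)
Your proposal is correct and matches the paper's own argument: the paper uses exactly the same key equivalence $\knows_a\varphi \equiv \varphi \wedge \uncertain_a^{=0}\set{\varphi}$, justified by the same observation that $\uncertain_a^{=0}\set{\varphi}$ holds iff all states in $[q]_{\indist_a}$ agree on $\varphi$. Your explicit structural induction merely spells out what the paper dismisses as immediate for the shared connectives, so there is nothing substantively different to compare.
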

\begin{proof}
  Because the set of formulas defined in $\ATLH$ includes all the formulas defined in $\ATEL$ except the formulas
  including $\knows$ operator, and the semantics of the common formulas are similar in both logics, it suffices to prove that
  for any formula of type $\varphi_1 = \knows_a\varphi$ in $\ATEL$ there is a formula $\varphi_2$ in $\ATLH$ such that for every $M$,
  \begin{center}
    $M,q\models\knows_a\varphi\ \Leftrightarrow \ M,q\models_{\uncertain}\varphi_2$
  \end{center}

  We claim that we can construct such $\varphi_2$ from $\knows_a{\varphi}$ to be $\varphi_2 = \varphi \wedge \uncertain_a^{=0}\set{\varphi}$.
  Therefore we need to prove that:

  \begin{center}
    $M,q\models_{\knows}\knows_a\varphi\ \Leftrightarrow \ M,q\models_{\uncertain}\varphi \wedge \uncertain_a^{=0}\set{\varphi}$
  \end{center}

  We have that $M,q\models_{\knows}\knows_a\varphi$ if and only if $\varphi$ holds in all the indistinguishable states from $q$ for $a$, which includes state $q$ itself.
  This means that $\varphi$ holds in $q$ and $|R_{a,q}(\varphi)| = 1$, which in \ATLH would be expressed as $ M,q\models_{\uncertain}\varphi \wedge \uncertain_a^{=0}\set{\varphi}$.
\end{proof}

\subsection{Uncertainty as Knowledge}\label{sec:unc-as-knowledge}

\begin{theorem}\label{theorem:expessivity}
  \ATEL is at least as expressive as \ATLH.
\end{theorem}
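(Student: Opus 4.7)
The plan is to prove the theorem by giving, for every $\ATLH$ formula $\varphi$, an $\ATEL$ formula $\mathrm{tr}(\varphi)$ that is satisfied in exactly the same pointed CEGMs. The translation is defined by induction on the structure of $\varphi$. All Boolean, temporal and strategic cases are handled homomorphically: $\mathrm{tr}$ commutes with $\lnot,\land,\coop{A}\Next,\coop{A}\Always,\coop{A}\cdot\Until\cdot$, since the semantic clauses for these operators are identical in both logics. The only nontrivial case is the uncertainty operator $\uncertain_a^{\otimes m}\beta$.

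The key observation is that, by construction of $\sim_a^\beta$, two states $q',q''\in[q]_{\sim_a}$ lie in the same equivalence class of $R_{a,q}(\beta)$ iff they agree on the truth value of every $\varphi_i\in\beta$. Hence, writing $\beta=\{\varphi_1,\dots,\varphi_n\}$ and letting $D=\{T,F\}^n$, the cardinality $|R_{a,q}(\beta)|$ equals the number of distinct ``truth-profiles'' $\delta\in D$ that are realised by some $q'\sim_a q$. For each profile $\delta\in D$ define
\begin{equation*}
  \psi_\delta \;\equiv\; \bigwedge_{i:\,\delta(i)=T}\mathrm{tr}(\varphi_i)\;\land\;\bigwedge_{i:\,\delta(i)=F}\lnot\mathrm{tr}(\varphi_i).
\end{equation*}
Then profile $\delta$ is realised in $[q]_{\sim_a}$ iff $M,q\models \lnot\knows_a\lnot\psi_\delta$. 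Consequently, the statement ``at least $k$ distinct profiles are considered possible by $a$ at $q$'' is captured by
\begin{equation*}
  \Theta_{\ge k} \;\equiv\; \bigvee_{\substack{S\subseteq D\\ |S|=k}}\;\bigwedge_{\delta\in S}\lnot\knows_a\lnot\psi_\delta,
\end{equation*}
and ``at most $k$ profiles are considered possible'' is simply $\lnot\Theta_{\ge k+1}$.

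With these building blocks, the translation of the uncertainty operator reduces to a choice of threshold $k$ depending on $\otimes$ and $m$: for $\uncertain_a^{\ge m}\beta$ take $k=\lceil 2^m\rceil$ and set $\mathrm{tr}(\varphi)=\Theta_{\ge k}$; for $\uncertain_a^{>m}\beta$ take $k=\lfloor 2^m\rfloor+1$; for $\uncertain_a^{\le m}\beta$ take $\lnot\Theta_{\ge\lfloor 2^m\rfloor+1}$; for $\uncertain_a^{<m}\beta$ take $\lnot\Theta_{\ge\lceil 2^m\rceil}$; and $\uncertain_a^{=m}\beta$ is translated as the conjunction of the $\ge m$ and $\le m$ cases. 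Correctness in each case is a direct calculation from $M,q\models\uncertain_a^{\otimes m}\beta$ iff $\log|R_{a,q}(\beta)|\otimes m$, together with the observation above that $|R_{a,q}(\beta)|$ equals the number of realised profiles.

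The main obstacle is not logical correctness (each case is essentially a bookkeeping exercise once profiles are identified with equivalence classes of $\sim_a^\beta$) but rather the size of $\mathrm{tr}(\varphi)$: the disjunction over size-$k$ subsets of $D$ has up to $\binom{2^n}{k}$ disjuncts and each $\psi_\delta$ has $n$ conjuncts, so $\mathrm{tr}$ is exponential in $|\beta|$ and doubly-exponential overall in the worst case. This blow-up is harmless for expressivity but will in fact be exploited in the succinctness argument that follows, so I would phrase the construction carefully enough that its size can be reused without modification in the next section.
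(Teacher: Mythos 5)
Your proposal is correct and follows essentially the same route as the paper's proof: both reduce $\uncertain_a^{\otimes m}\beta$ to counting the truth-profiles over $\beta$ realised in $[q]_{\sim_a}$ (which equal the classes of $R_{a,q}(\beta)$ because $\sim_a$ is an equivalence), detect realisation of a profile $\psi_\delta$ by $\lnot\knows_a\lnot\psi_\delta$, and take a Boolean combination over subsets of profiles. The only cosmetic difference is that the paper's building block asserts ``exactly $m$ profiles realised'' (a disjunction over all assignments of realised/unrealised status to every profile), whereas you build ``at least $k$'' and derive the other comparison operators by negation and floor/ceiling thresholds, which in fact covers the cases $\le$ and $\ge$ that the paper leaves implicit.
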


The proof proceeds by translating every occurrence of $\uncertain_a^{\otimes m}\beta$ to a Boolean combination of epistemic formulas that express the knowledge of agent $a$ with respect to the \emph{indistinguishability classes} of the formulas in $\beta$, defined as follows:
\begin{definition}[Indistinguishability class of a formula]\label{def:indist1}
  For a given model $M$, if $q\in\States$, $a\in\Agt$ and $\varphi$ is a formula,
  then we define the indistinguishability class of $\varphi$ with respect to $q$ and $a$ as follows:
  \begin{center}
    $[\varphi]_a^q = [\varphi] \cap [q]_{\indist_a}$,
  \end{center}
  where $[q]_{\indist_a}$ denotes the set of states that are indistinguishable from $q$ for $a$,
  and $[\varphi]$ is the set of states $q'\in\States$ were $M,q'\models \varphi$.
\end{definition}

The full proof is technical and rather tedious; it can be found in the \short{extended version of the paper~\cite{Tabatabaei23uncertainty-arxiv}}\extended{appendix}.
Here, we present how the translation works on an example.
Let $\varphi_1$ and $\varphi_2$ be two formulas that do not contain any $\uncertain$ operators.
  We would like to find an \ATEL formula $P(\varphi_1, \varphi_2)$, such that:
  \begin{center}
    $M,q\models \uncertain_a^{=\log 3}\set{\varphi_1, \varphi_2}\ \Leftrightarrow\ M,q\models_{\knows}P(\varphi_1,\varphi_2)$
  \end{center}

  First we define new formulas $A$, $B$, $C$ and $D$ as follows:
  $A= \varphi_1 \wedge \varphi_2$,
  $B= \varphi_1 \wedge \lnot\varphi_2$,
  $C= \lnot\varphi_1 \wedge \varphi_2$ and
  $D= \lnot\varphi_1 \wedge \lnot\varphi_2$.
  It is clear that the sets of states $[A]_a^q$, $[B]_a^q$, $[C]_a^q$ and $[D]_a^q$ are mutually exclusive, and moreover they partition $[q]_{\indist_a}$.
  Because the truth value of each one of $A$, $B$, $C$, $D$ corresponds to the truth value of exactly one of four possible different valuation combinations of $\varphi_1$ and $\varphi_2$, so they are distinct.

  If $M,q\models \uncertain_a^{=\log 3}\set{\varphi_1, \varphi_2}$, then exactly one of $[A]_a^q$, $[B]_a^q$, $[C]_a^q$ or $[D]_a^q$ has to be empty.
  Because these sets are mutually disjoint, if all are non-empty then we should have at least four different states in $[q]_{\indist_a}$ with the four different valuation combinations for
  the formulas $\varphi_1$ and $\varphi_2$. This would mean that $M,q\models \uncertain_a^{=\log 4}\set{\varphi_1, \varphi_2}$
  which contradicts $M,q\models \uncertain_a^{=\log 3}\set{\varphi_1, \varphi_2}$. Similarly if more than one of  $[A]_a^q$, $[B]_a^q$, $[C]_a^q$ or $[D]_a^q$ are empty,
  then it means that only two or less possible valuation combinations of $\varphi_1$ and $\varphi_2$ exist in $[q]_{\indist_a}$.
  This entails that $M,q\models \uncertain_a^{<\log 3}\set{\varphi_1, \varphi_2}$, which is again a contradiction.
  The converse is also true: if exactly three of the sets $[A]_a^q$, $[B]_a^q$, $[C]_a^q$ or $[D]_a^q$ are non-empty, then there are exactly three
  valuation combinations of $\varphi_1$ and $\varphi_2$ in $[q]_{\indist_a}$, which follows that $M,q\models \uncertain_a^{=\log 3}\set{\varphi_1, \varphi_2}$.
  So the formula $M,q\models \uncertain_a^{=\log 3}\set{\varphi_1, \varphi_2}$ holds if and only if exactly one of $[A]_a^q$, $[B]_a^q$, $[C]_a^q$ or $[D]_a^q$ is empty.
  This can happen in four different ways (one corresponding to each of  $[A]_a^q$, $[B]_a^q$, $[C]_a^q$ or $[D]_a^q$ being empty).

  First consider the case where $[C]_a^q$ is empty. Then:
  \begin{align*}
                    & \nexists q' \textit{ s.t }\ q'\indist_a q \textit{ and } M,q'\models C \\
    \Leftrightarrow & (\forall q',\   q'\indist_a q \ \Longrightarrow  M,q'\not\models C)    \\
    \Leftrightarrow & (\forall q',\   q'\indist_a q \ \Longrightarrow  M,q'\models \lnot C)  \\
    \Leftrightarrow & M,q\models \knows_a \lnot C
  \end{align*}

  In a similar way we can show that $[A]_a^q$ is non-empty iff $M,q\models \lnot \knows_a \lnot A$. The same goes for  $[B]_a^q$, and $[D]_a^q$.
  Therefore among  $[A]_a^q$, $[B]_a^q$, $[C]_a^q$ and $[D]_a^q$, only $[C]_a^q$ is empty iff:

  \begin{center}
    $M,q\models \lnot \knows_a \lnot A \wedge \lnot \knows_a \lnot B \wedge \knows_a \lnot C \wedge \lnot \knows_a \lnot D$
  \end{center}

  We got this result by assuming that only $[C]_a^q$ is empty. Given that $M,q\models \uncertain_a^{=\log 3} \set{\varphi_1, \varphi_2}$
  iff exactly one of $[A]_a^q$, $[B]_a^q$, $[C]_a^q$ or $[D]_a^q$ is empty, and knowing that we have four possible choices for which one is to be empty, we get that:
  \begin{center}
    $M,q\models \uncertain_a^{=\log 3}\set{\varphi_1, \varphi_2} \Leftrightarrow M,q\models_{\knows}P(\varphi_1, \varphi_2)$
  \end{center}

  where $P(\varphi_1, \varphi_2)$ is defined as:
  \begin{align*}
  P(\varphi_1, \varphi_2) =\quad & ( \knows_a \lnot (\varphi_1 \wedge \varphi_2) \wedge \lnot \knows_a \lnot (\varphi_1 \wedge \lnot\varphi_2)                              \\
         & \qquad \wedge \lnot \knows_a \lnot (\lnot\varphi_1 \wedge \varphi_2) \wedge \lnot \knows_a \lnot (\lnot\varphi_1 \wedge \lnot\varphi_2)) \\
    \vee & ( \lnot \knows_a \lnot (\varphi_1 \wedge \varphi_2) \wedge \knows_a \lnot (\varphi_1 \wedge \lnot\varphi_2)                              \\
         & \qquad\wedge \lnot \knows_a \lnot (\lnot\varphi_1 \wedge \varphi_2) \wedge \lnot \knows_a \lnot (\lnot\varphi_1 \wedge \lnot\varphi_2))  \\
    \vee & ( \lnot \knows_a \lnot (\varphi_1 \wedge \varphi_2) \wedge \lnot \knows_a \lnot (\varphi_1 \wedge \lnot\varphi_2)                        \\
         & \qquad\wedge \knows_a \lnot (\lnot\varphi_1 \wedge \varphi_2) \wedge \lnot \knows_a \lnot (\lnot\varphi_1 \wedge \lnot\varphi_2))        \\
    \vee & ( \lnot \knows_a \lnot (\varphi_1 \wedge \varphi_2) \wedge \lnot \knows_a \lnot (\varphi_1 \wedge \lnot\varphi_2)                        \\
         & \qquad\wedge \lnot \knows_a \lnot (\lnot\varphi_1 \wedge \varphi_2) \wedge \knows_a \lnot (\lnot\varphi_1 \wedge \lnot\varphi_2)).
  \end{align*}

\section{Uncertainty Is Exponentially More Succinct than Knowledge}\label{sec:succinctness}

The notion of succinctness~\cite{Stockmeyer72phd,Wilke99succinctness,Adler01succinctness} is a refinement of the notion of expressivity. Assume that one particular property can be expressed in both languages $L_1$ and $L_2$, with
formulas $\varphi_1$ and $\varphi_2$ respectively. When comparing the representational succinctness of these two languages, we are interested in whether there is a significant
difference in the lengths of $\varphi_1$ and $\varphi_2$. Similar to analysis of complexity, what we consider \textit{significant} is at least exponential growth of the size of
a formula in one of the languages comparing to the equivalent formula in the other language. In this section, we prove that the language of $\ATLH$ is
exponentially more succinct than $\ATEL$. We use the so called \emph{formula size games (FSG)} from~\cite{French13succinctness} to construct the proof.
In brief, we will show that for any $n\in\mathbb{N}$, there is a formula $\varphi_n$ of size linear to $n$ in \ATHL, such that for any formula $\varphi_n'$ in
$\ATEL$ with the same set of satisfying models as $\varphi_n$, the parse tree of $\varphi_n'$ will have at least $2^n$ distinct nodes, and therefore the size of $\varphi_n'$ is at least exponential in $n$.

\subsection{Succinctness and Formula Size Games}\label{sec:formulasizegame}

Before showing that \ATL with uncertainty is exponentially more succinct than \ATL with knowledge, we summarize the basic terminology.

\begin{definition}[Length of formulas in \ATLH]\label{def:formulasize}
  The length of formula $\varphi$ is denoted by $|\varphi|$, recursively defined as follows:\\
  $|p| = 1$,\\
  $|(\varphi_1 \vee \varphi_2)| = |\varphi_1 \Until \varphi_2| = |\varphi_1| + |\varphi_2| + 1$,\\
  $|\lnot\varphi| = |\Next\varphi| = |\Always\varphi| = 1 + |\varphi|$\\
  $|\coop{A}\varphi| = |A| + |\varphi|$,\\
  $|\uncertain_a^{\otimes m}\beta| = 1 + \sum_{\varphi_i\in\beta} |\varphi_i|$.
\end{definition}

\begin{definition}[Succinctness]\label{def:succinctness}
  Let $L_1=\langle \varPhi_1, \models_1, M \rangle$ and  $L_2=\langle \varPhi_2, \models_2, M \rangle$ be two logics such that  $L_1\leqslant_M L_2$
  and let $f(x) = O(g(n))$ be a strictly increasing function.  If for every $n\in \mathbb(N)$ there are two formulas $\alpha_n\in \varPhi_1$
  and $\beta_n \in \varPhi_2$ where:
  \begin{itemize}
    \item $|\alpha_n| = f(n)$
    \item $|\beta_n| = 2^{g(n)}$
    \item $\beta_n$ is the shortest formula on $\varPhi_2$ that is equivalent to $\alpha_n$ on $M$,
  \end{itemize}
  then we say that $L_1$ is exponentially more succinct than $L_2$ on $M$ and write it as: $L_1 \not\leqslant_M^\mathit{subexp} L_2$.
\end{definition}

In the following, for a set of pointed models $A$, we use the term $A\models \varphi$ to mean that $\forall M\in A \,.\, M\models\varphi$.

\begin{definition}[FSG]\label{def:FSG}
One-person formula size game (FSG) on two sets of pointed models $A$ and $B$ is played as follows: during the course of the game, a game tree is constructed
such that each node is labeled with pair $\pairModels{C}{D}$ of sets of pointed models.
The possible moves for the player (called \emph{the spoiler}) on each node of the tree are $\set{p\in\atomicP, \lnot, \vee, \knows_i}$, where $i\in\Agt$. A node can be open or closed.
Once a node is closed, no further move can be played there. The condition and consequences of each of possible moves are as below:
\begin{description}
\item[$p\in\atomicP$] (\textit{Atomic} move): the spoiler chooses $p\in\atomicP$ such that $C\models p$ and $D\models \lnot p$. Then the node is declared closed.

\item[$\lnot$] (\textit{Not} move): A new node $\pairModels{C}{D}$ is added to the tree.

\item[$\vee$] (\textit{Or} move): two nodes $\pairModels{C_1}{D}$ and $\pairModels{C_2}{D}$ are added to the tree such that $C_1 \cup C_2 = C$.

\item[$K_i$] (\textit{Knows} move), where $i\in\Agt$: For each $(M, s)\in D$ the spoiler chooses a pointed model $(M, s')$ such that $s\indist_a s'$.
If for some $(M, s)\in D$ such $(M, s')$ does not exist, then the spoiler cannot play this move. All such chosen pointed models are collected
in $D'$. Moreover, for each $(M, s)\in C$, all possible pointed models $(M,s')$ such that $s\indist_a s'$ are added to $C'$. Then a new node
$\pairModels{C'}{D'}$ is added to the tree.
\end{description}
We say that the spoiler wins FSG starting at $\pairModels{A}{B}$ in $n$ moves iff there is a game tree $T$ with root $\pairModels{A}{B}$
and precisely n nodes such that every leaf of $T$ is closed.
\end{definition}

\begin{theorem}[\cite{French13succinctness}]\label{theorem:formulaSizeGame}
  The spoiler can win the FSG starting at $\pairModels{A}{B}$ in less than $k$ moves iff there is some $n<k$ and a formula $\varphi\in\varPhi_{MEL}$
  such that $A\models_{MEL} \varphi$, $B\models_{MEL} \lnot \varphi$ and $|\varphi|=n$, where $\varPhi_{MEL}$ is the set of formulas defined in \textit{Multiagent epistemic logic}
  and $\models_{MEL}$ shows truth relation in it.
\end{theorem}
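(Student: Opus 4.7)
The plan is to prove both directions of the biconditional by induction, exploiting the tight correspondence between the four spoiler moves and the four syntactic constructors of multi-agent epistemic logic. For the forward direction I would induct on the number of nodes $n$ of a winning game tree rooted at $\pairModels{A}{B}$; for the backward direction I would induct on the structure of the separating formula. In both directions the constructions are arranged so that one move of the FSG contributes exactly one symbol of the formula, matching Definition~\ref{def:formulasize} exactly.

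For the forward direction, given a winning game tree $T$ of $n$ nodes, I would build a separating formula $\varphi_T$ bottom-up by case analysis on the move at the root. An atomic move choosing $p$ yields $\varphi_T = p$ (the closure condition guarantees $A \models p$ and $B \models \lnot p$); a \textit{Not} move whose child produces $\psi$ yields $\varphi_T = \lnot \psi$, noting that the child node conceptually swaps the roles of $C$ and $D$; an \textit{Or} move with children on $\pairModels{C_1}{D}, \pairModels{C_2}{D}$ yielding $\psi_1, \psi_2$ gives $\varphi_T = \psi_1 \vee \psi_2$, which separates because every model in $A = C_1 \cup C_2$ satisfies at least one disjunct while $D$ refutes both; a $\knows_i$ move whose child yields $\psi$ gives $\varphi_T = \lnot \knows_i \lnot \psi$, exploiting the fact that $C'$ collects every indistinguishable successor (so the universal reading of $\knows_i \lnot \psi$ fails on $A$) while $D'$ contains a chosen witness per pointed model certifying $\lnot \knows_i \lnot \psi$ on $B$. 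Size accounting shows $|\varphi_T|$ equals the number of nodes of $T$.

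For the backward direction, given $\varphi$ with $A \models \varphi$ and $B \models \lnot \varphi$, the spoiler plays the outermost connective of $\varphi$ and recurses: atomic for $p$; \textit{Not} followed by recursion on $\pairModels{B}{A}$ with $\psi$ when $\varphi = \lnot \psi$; \textit{Or} when $\varphi = \psi_1 \vee \psi_2$, partitioning $A$ into $A_j = \{(M,s) \in A \mid M,s \models \psi_j\}$ and recursing on each $\pairModels{A_j}{B}$; and $\knows_i$ when $\varphi$ is placed in the canonical form $\lnot \knows_i \lnot \psi$, using the semantic failure of $\knows_i \lnot \psi$ on each point of $B$ to select an indistinguishable witness of $\psi$ into $D'$, while the truth of $\lnot \knows_i \lnot \psi$ on $A$ propagates to the collected successors on the $C'$-side. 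The inductive hypothesis on strictly smaller subformulas closes each branch using at most the remaining budget of moves.

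The main obstacle will be aligning the asymmetric semantics of the \textit{Knows} move — universal on the $C$-side, existential on the $D$-side — with the intended epistemic operator in both directions, and tracking polarities through nested negations so that the synthesized formula has size exactly matching the number of nodes in the game tree. Once the $\knows_i$ case is pinned down in the canonical form $\lnot \knows_i \lnot \psi$, the remaining cases are mechanical and both inductions close to deliver the stated equivalence.
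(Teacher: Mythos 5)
First, note that the paper does not prove this theorem at all: it is imported verbatim from French et al.\ (the cited reference), so there is no in-paper proof to compare your route against. Your overall plan --- a double induction matching each FSG move to one syntactic constructor, with the size accounting showing that the number of tree nodes equals the formula length --- is the standard and correct skeleton for this kind of Adler--Immerman result.

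However, your treatment of the \textit{Knows} move has the polarity inverted, and this is a genuine error rather than a presentational choice. In the game as defined, the left component $C$ is expanded to \emph{all} indistinguishable successors while the spoiler picks \emph{one} witness per point of $D$. If the child $\pairModels{C'}{D'}$ is separated by $\psi$ (i.e.\ $C'\models\psi$ and $D'\models\lnot\psi$), then every successor of every $C$-point satisfies $\psi$, which gives $C\models\knows_i\psi$, and each $D$-point has a chosen successor falsifying $\psi$, which gives $D\models\lnot\knows_i\psi$. So the move corresponds directly to the box, $\varphi_T=\knows_i\psi$, not to $\lnot\knows_i\lnot\psi$ as you claim. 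With your formula the separation actually fails on the $B$-side: $B\models\lnot(\lnot\knows_i\lnot\psi)$ would require \emph{every} successor of every $B$-point to falsify $\psi$, whereas the game only certifies one witness per point. The same inversion breaks your size accounting --- $\lnot\knows_i\lnot\psi$ costs three symbols for a single game node, contradicting your own claim that each move contributes exactly one symbol --- and it propagates into the backward direction, where the spoiler should play the $\knows_i$ move precisely when the outermost connective of the separating formula is $\knows_i$ itself (after pushing the formula into the closure of $\varPhi_{MEL}$ under the four constructors $p$, $\lnot$, $\vee$, $\knows_i$), not when it has the shape $\lnot\knows_i\lnot\psi$. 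Once the box reading is restored, the rest of your argument (atomic, \textit{Not} with the sides swapped, \textit{Or} with $C_1\cup C_2=C$) is correct and the two inductions close as you describe.
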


The game tree through which the spoiler wins the FSG is the parse three of formula $\varphi$ in the language of \textit{Multiagent epistemic logic}.
For any $\pairModels{A}{B}$, the set of all closed game trees with root $\pairModels{A}{B}$ is denoted by $T(\pairModels{A}{B})$.
Consequently, the set of closed trees represents also the set of all formulas $\varphi$ that could distinguish the set of pointed models $A$
from the set of pointed models $B$ via the truth relation $\models_{MEL}$.

\subsection{\ATLH Is More Succinct than \ATLK}\label{sec:succinctness-detailed}

Theorem~\ref{theorem:formulaSizeGame} allows us to use FSG for proving the succinctness of our new logic \ATLH with respect to $\ATEL$.

\begin{theorem}\label{theorem:succinctness}
  The logic $\ATHL$ is exponentially more succinct than the logic $\ATEL$.
\end{theorem}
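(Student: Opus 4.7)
The plan is to instantiate Definition~\ref{def:succinctness} by exhibiting, for every $n\in\mathbb{N}$, a short \ATLH formula $\alpha_n$ together with two classes of pointed models $A_n$ and $B_n$ such that $A_n\models\alpha_n$ and $B_n\models\lnot\alpha_n$, while any \ATEL formula separating $A_n$ from $B_n$ must have parse tree of size at least $2^n$. By Theorem~\ref{theorem:formulaSizeGame} the latter reduces to showing that the spoiler cannot close the FSG starting at $\pairModels{A_n}{B_n}$ in fewer than $2^n$ moves, which would establish $\ATLH\not\leqslant^{\mathit{subexp}}\ATEL$.

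For the construction I would fix $n$ fresh atomic propositions $p_1,\ldots,p_n$ and one agent $a$, and build a pointed CEGM $(M^+, q^+)$ whose $\indist_a$-equivalence class of $q^+$ contains exactly $2^n$ states, one per Boolean valuation of $\set{p_1,\ldots,p_n}$ (the action structure is irrelevant: self-loops on a dummy action suffice). For each $v\in\set{0,1}^n$ let $(M_v, q_v)$ be obtained from $(M^+, q^+)$ by deleting the unique state with valuation $v$. Set $A_n=\set{(M^+, q^+)}$, $B_n=\set{(M_v, q_v)\mid v\in\set{0,1}^n}$, and define
\[
\alpha_n \;:=\; \uncertain_a^{\geqslant n}\set{p_1,\ldots,p_n}.
\]
By Definition~\ref{def:formulasize} we have $|\alpha_n|=n+1$, while $A_n\models\alpha_n$ because $|R_{a,q^+}(\set{p_1,\ldots,p_n})|=2^n$ and $\log 2^n=n$, and each $(M_v,q_v)\in B_n$ has only $2^n{-}1$ equivalence classes, so $\alpha_n$ fails on $B_n$.

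The remaining task is the FSG lower bound. The key intuition is that $(M^+, q^+)$ is maximally symmetric with respect to $\set{p_1,\ldots,p_n}$: every valuation is realised somewhere in the $\indist_a$-neighbourhood of $q^+$, while each $(M_v, q_v)$ misses precisely the valuation $v$. Consequently no atomic move can close a node whose $D$-side still contains two distinct elements of $B_n$, because their missing valuations differ and no single proposition can discriminate $(M^+,q^+)$ from both at once. The $\vee$-move only splits the $C$-side, and a $K_a$-move merely reshuffles neighbourhoods within each model without reducing $|D|$; the only mechanism that lets the spoiler separate elements of $B_n$ is alternating $\lnot$-moves (which swap the roles of $C$ and $D$) with $\vee$-moves, and each such partitioning of $B_n$ must eventually be answered by at least one leaf per element. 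I would formalise this via a charging invariant on game trees: assign to every open node $\pairModels{C}{D}$ the set $D\cap B_n$ still to be discharged, and show by induction that each move either transfers that load to at most one child or splits it while creating at least the corresponding number of extra leaves, concluding that any closed tree has $\ge 2^n$ leaves and hence $\ge 2^n$ nodes.

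The main obstacle will be making the charging argument airtight: the interaction of $\lnot$-moves (which swap $C$ and $D$, temporarily moving the load to the opposite side) with $K_a$-moves (which can replace a single model by many copies of its neighbours) needs careful bookkeeping so that no combination collapses the $2^n$ cases of $B_n$ into fewer leaves. This is essentially the same pattern used in~\cite{French13succinctness} for purely epistemic logics; adapting that template to the symmetric family above, and checking that the ATL connectives do not help the spoiler because the formula lives in a temporally trivial fragment of the models, yields the required $2^n$ bound and completes the proof of the theorem.
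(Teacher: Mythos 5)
Your setup coincides with the paper's: the same family of models (one state per valuation of $p_1,\ldots,p_n$, all $\indist_a$-related, with the $B$-side obtained by deleting one state at a time), an $\uncertain_a$-formula of linear size that counts the $2^n$ equivalence classes, and the reduction of the lower bound to the formula size game of Theorem~\ref{theorem:formulaSizeGame}. Two smaller points first: your $B_n$ ranges over \emph{all} $v\in\set{0,1}^n$, including the valuation of the designated point $q^+$, for which $(M_v,q_v)$ is not well defined --- you must exclude that valuation, as the paper does, leaving $2^n-1$ models; and the passage from \ATEL to the purely epistemic language required by the game needs the explicit observation that on reflexive, single-agent models every strategic--temporal formula collapses to an epistemic formula of no greater length (the paper's Lemma~\ref{lemma:simplify1} and Observation~\ref{obs:simplify2}), which you mention only in passing.

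The genuine gap is the lower bound itself, which is the entire technical content of the theorem. Your charging invariant ``$D\cap B_n$ still to be discharged'' does not survive the moves: in a $\knows_a$-move the spoiler may replace each $(N^n_j,0)\in D$ by some $(N^n_j,s')$ with $s'\neq 0$, so $D\cap B_n$ becomes empty and the charge is lost even though nothing has been separated; and after a $\lnot$-move the load sits on the $C$ side, where the $\vee$-move is permitted to split it, so ``transfers to at most one child'' fails. What the paper does instead is define, for each deleted state $j$, six node types (\sOne--\sSix) characterised by which of $(M^n,\cdot)$ and $(N^n_j,\cdot)$ occur on which side of the label, and verify by case analysis that every spoiler move applied to a $j$-typed node produces at least one $j$-typed child, while only types \sFive and \sSix admit a closing atomic move. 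Hence every closed tree contains, for each of the $2^n-1$ values of $j$, a maximal path of $j$-typed nodes ending in a closable leaf, and these leaves are pairwise distinct across different $j$ (their labels contain $(M^n,j)$ as a singleton); together with the root and at least one internal node this yields the $2^n$ bound. Your symmetry intuition is the right one, but without such a per-$j$ invariant that is verified to be preserved under all four moves, the argument does not close.
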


The full proof is rather technical; it can be found in the \short{extended version of the paper~\cite{Tabatabaei23uncertainty-arxiv}}\extended{appendix}. Here we explain the sketch of the proof:

\begin{proof}[Proof sketch]
Let $Lang(\ATEL)$ and $Lang(\ATHL)$ represent the languages of the logics $\ATEL$ and $\ATHL$ respectively.
For every $n\in \mathbb{N}$, we define a formula $\varphi_n\in Lang(\ATHL)$ where $f(n) = n$.
Then, we define two sets of pointed models $A_n$ and $B_n$, such that $A_n\models_{\uncertain}\varphi_n$ and $B_n\models_{\uncertain}\lnot\varphi_n$.
Because $\ATEL$ is as expressive as $\ATHL$, there exists a formula $\psi_n\in Lang(\ATEL)$ which is the shortest formula in $Lang(\ATEL)$ equivalent to $\varphi_n$.
Therefore $\psi_n$ too can distinguish sets $A_n$ and $B_n$,
which means the spoiler can win the FSG game starting at $\pairModels{A_n}{B_n}$ by playing the formula $\psi_n$.
We then prove that the spoiler cannot win the FSG game starting at $\pairModels{A_n}{B_n}$ in less than $2^n$ moves, which means the size of $\psi_n$ is at least $2^n$.
\end{proof}

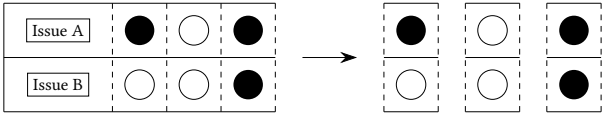
\begin{figure}[t]\centering
\resizebox{0.45\textwidth}{!}{\begin{tikzpicture}[node distance = 4cm]

\node[draw, align = right] at (-1, 1.5) {Issue A};
      \node[draw, align = right] at (-1, 0.5) {Issue B};

      \draw
      (-2,0) -- (3,0)
      (-2,1) -- (3,1)
      (-2,2) -- (3,2);

      \draw (-2,0) -- (-2,2);

      \draw[dashed]
      (0,0) -- (0,2)
      (1,0) -- (1,2)
      (2,0) -- (2,2)
      (3,0) -- (3,2);

      \draw (0.5,0.5) circle (2.0ex);
      \fill (0.5,1.5) circle (2.0ex);
      \draw (1.5,0.5) circle (2.0ex);
      \draw (1.5,1.5) circle (2.0ex);
      \fill (2.5,0.5) circle (2.0ex);
      \fill (2.5,1.5) circle (2.0ex);

      \draw [-{Stealth[length=3mm, width=2mm]}] (3.5,1) -- (4.5,1);

\draw
      (5,0) -- (6,0) (6.5,0) -- (7.5,0) (8,0) -- (9,0)
      (5,1) -- (6,1) (6.5,1) -- (7.5,1) (8,1) -- (9,1)
      (5,2) -- (6,2) (6.5,2) -- (7.5,2) (8,2) -- (9,2);

      \draw[dashed]
      (5,0) -- (5,2) (6,0) -- (6,2)
      (6.5,0) -- (6.5,2) (7.5,0) -- (7.5,2)
      (8,0) -- (8,2) (9,0) -- (9,2);

      \draw (5.5,0.5) circle (2.0ex);
      \fill (5.5,1.5) circle (2.0ex);
      \draw (7,0.5) circle (2.0ex);
      \draw (7,1.5) circle (2.0ex);
      \fill (8.5,0.5) circle (2.0ex);
      \fill (8.5,1.5) circle (2.0ex);

    \end{tikzpicture}
  }
  \caption{Example ThreeBallot vote. Here the voter has voted for issue $A$ and against issue $B$. Thus, the value of the vote is $A\notB$. The voter fills two fields in the row of
  issue $A$ and one space in the row of issue $B$, and then she separates the three ballots. The resulting ballot set is $\set{FB, BB, FF}$.}
  \label{fig:threeballot}
\end{figure}

\begin{table}[t]\scalebox{0.65} {
    \mbox{}\!\!\!\!\!\!\!\!
    \begin{tabular}{l|c|r}
        \textbf{Vote and ballot set (BS)} & \textbf{Receipt} & \textbf{Possible information sets of the coercer}\\
        \hline
        \hline
        \multirow{3}{*}{Vote = $\notA\notB$, BS = $\set{BB, FB, BF}$}
        & BB & $\set{\notA\notB}$, $\set{A \notB, \notA\notB}$, $\set{\notA B, \notA\notB}$, $\set{A \notB, \notA B, \notA\notB}$, $\set{A \notB, \notA B, AB, \notA\notB}$ \\
        & FB & $\set{\notA\notB}$, $\set{\notA B, \notA\notB}$,  $\set{A \notB, \notA\notB}$, $\set{A \notB, AB, \notA\notB}$, $\set{A \notB, \notA B, AB, \notA\notB}$\\
        & BF & $\set{\notA\notB}$, $\set{\notA B, \notA\notB}$, $\set{A \notB, \notA\notB}$, $\set{\notA B, AB, \notA\notB}$, $\set{A \notB, \notA B, AB, \notA\notB}$\\
        \hline
        \multirow{3}{*}{Vote = $\notA\notB$, BS = $\set{BB, BB, FF}$}
        & BB & $\set{\notA\notB}$, $\set{\notA B, \notA\notB}$, $\set{A \notB, \notA\notB}$, $\set{AB, \notA\notB}$, $\set{A \notB, \notA B, AB, \notA\notB}$\\
        & FF & $\set{\notA\notB}$, $\set{\notA B, \notA\notB}$, $\set{A \notB, \notA\notB}$, $\set{AB, \notA\notB}$, $\set{A \notB, \notA B, AB, \notA\notB}$\\
         \hline
         \hline
        \multirow{3}{*}{Vote = $A \notB$, BS = $\set{BB, FB, FF}$}
        & BB & $\set{A \notB}$, $\set{A \notB, AB}$, $\set{A \notB, \notA\notB}$, $\set{A \notB, \notA B, \notA\notB}$, $\set{A \notB, \notA B, AB, \notA\notB}$\\
        & FB & $\set{A \notB}$, $\set{A \notB, AB}$, $\set{A \notB, \notA\notB}$, $\set{A \notB, AB, \notA\notB}$, $\set{A \notB, \notA B, AB, \notA\notB}$\\
        & FF & $\set{A \notB}$, $\set{A \notB, AB}$, $\set{A \notB, \notA\notB}$, $\set{A \notB, \notA B, AB}$, $\set{A \notB, \notA B, AB, \notA\notB}$\\
        \hline
        \multirow{3}{*}{Vote = $A \notB$, BS = $\set{FB, FB, BF}$}
        & FB & $\set{A \notB}$, $\set{A \notB, \notA B}$, $\set{A \notB, AB}$, $\set{A \notB, \notA\notB}$, $\set{A \notB, \notA B, AB, \notA\notB}$\\
        & BF & $\set{A \notB}$, $\set{A \notB, \notA B}$, $\set{A \notB, AB}$, $\set{A \notB, \notA\notB}$, $\set{A \notB, \notA B, AB, \notA\notB}$\\
        \hline
        \hline
        \multirow{3}{*}{Vote = $\notA B$, BS = $\set{BB, BF, FF}$}
        & BB & $\set{\notA B}$, $\set{\notA B, \notA\notB}$, $\set{\notA B, AB}$, $\set{A \notB, \notA B, \notA\notB}$, $\set{A \notB, \notA B, AB, \notA\notB}$\\
        & BF & $\set{\notA B}$, $\set{\notA B, AB}$, $\set{\notA B, \notA\notB}$, $\set{\notA B, AB, \notA\notB}$, $\set{A \notB, \notA B, AB, \notA\notB}$\\
        & FF & $\set{\notA B}$, $\set{\notA B, \notA\notB}$, $\set{\notA B, AB}$, $\set{A \notB, \notA B, AB}$, $\set{A \notB, \notA B, AB, \notA\notB}$\\
        \hline
        \multirow{3}{*}{Vote = $\notA B$, BS = $\set{FB, BF, BF}$}
        & FB & $\set{\notA B}$, $\set{\notA B, \notA\notB}$, $\set{\notA B, AB}$, $\set{A \notB, \notA B}$, $\set{A \notB, \notA B, AB, \notA\notB}$\\
        & BF & $\set{\notA B}$, $\set{\notA B, \notA\notB}$, $\set{\notA B, AB}$, $\set{A \notB, \notA B}$, $\set{A \notB, \notA B, AB, \notA\notB}$\\
        \hline
        \hline
        \multirow{3}{*}{Vote = $AB$, BS = $\set{FB, BF, FF}$}
        & FB & $\set{AB}$, $\set{A \notB, AB}$, $\set{\notA B, AB}$, $\set{A \notB, AB, \notA\notB}$, $\set{A \notB, \notA B, AB, \notA\notB}$\\
        & BF & $\set{AB}$, $\set{A \notB, AB}$, $\set{\notA B, AB}$, $\set{\notA B, AB, \notA\notB}$, $\set{A \notB, \notA B, AB, \notA\notB}$\\
        & FF & $\set{AB}$, $\set{A \notB, AB}$, $\set{\notA B, AB}$, $\set{A \notB, \notA B, AB}$, $\set{A \notB, \notA B, AB, \notA\notB}$\\
        \hline
        \multirow{3}{*}{Vote = $AB$, BS = $\set{BB, FF, FF}$}
        & BB & $\set{AB}$, $\set{AB, \notA\notB}$, $\set{A \notB, AB}$, $\set{\notA B, AB}$, $\set{A \notB, \notA B, AB, \notA\notB}$\\
        & FF & $\set{AB}$, $\set{AB, \notA\notB}$, $\set{A \notB, AB}$, $\set{\notA B, AB}$, $\set{A \notB, \notA B, AB, \notA\notB}$\\
        \hline
        \hline
      \end{tabular}
    }
  \bigskip
  \caption{Indistinguishability sets of the coercer in a two voter, two issue referendum with ThreeBallot, based on the selected receipt by the voter.
    Each row lists the possible epistemic classes $[q]_{\sim_c}$ of the coercer, depending on the voter's vote and the way she filled her ballots (invisible to the coercer), the voter's election receipt (visible to the coercer), and the set of ballots posted on the public bulletin board }
\label{table:threeballot}
  \end{table}

\section{Case Study: Three Ballot}\label{sec:threeballot}

We demonstrate the usefulness of our proposal on a real-life voting scenario.

\subsection{Voting with ThreeBallot}

ThreeBallot~\cite{Rivest06threeBallot,Rivest07threeProtocols} has been proposed by Rivest as a paper-based end-to-end verifiable voting protocol.
Here, we use a simplified version of the protocol, which can be used for multiple-issue referendums.
Following the example in Section~\ref{sec:motivating}, we consider a two-issue referendum, in which each voter votes to accept or reject two proposals $A$ and $B$.

In ThreeBallot, the ballots are prepared such that for each issue, there are three empty fields that the voter can fill.
For voting to accept the issue, the voter has to fill exactly two of the empty fields, and to vote to reject the issue, the voter has to fill exactly one empty field.
However, the exact positions of the filled spaces are up to the voter. After filling the ballots, the voter separates three columns of fields, and in this way creates three separate ballots.
The voter can make (and keep) a copy of one of those ballots as the receipt. Finally, she puts all the original ballots in the ballot box.
The tally of the votes is done by counting the filled spaces for each issue.
The difference between the number of filled spaces for each issue and the number of voters shows the number of votes in favor of that issue.
After the tallying, all the ballots are published on a public bulletin board, so that everyone can check the correctness of the result.

One of the main goals of ThreeBallot is \emph{coercion resistance}~\cite{Juels05coercion}, i.e., the protocol should make it impossible for a third party to successfully coerce or bribe voters into voting in a particular way.
Informally, coercion resistance is usually understood as the inability of the coercer to learn how the voter has voted, even if the voter cooperates with the coercer.
Interestingly, ThreeBallot was both claimed secure~\cite{Rivest07threeProtocols} and insecure~\cite{Appel06threeBallot}.
It might seem that one of the claims must be wrong, but a closer look reveals that they are based on different concepts of vote privacy.
In~\cite{Rivest07threeProtocols}, it is argued that the coercer cannot get to \emph{know how the voter has voted}, which is a strategic-epistemic property.
In contrast, \cite{Appel06threeBallot}~argues that the coercer can \emph{gain information} about the value of the vote.
We demonstrate the difference in the remainder of this section.

\subsection{Model of the Scenario}

In our example, the two propositions that determine the vote of the voter are $V_A$ and $V_B$.
We encode a vote against an issue by using \textit{overline}. So, $\notA B$ indicates a vote for issue $B$ and against issue $A$; in other words, it states that $V_A$ is \textit{false} and $V_B$ is \textit{true}.
This way, the set of possible votes from the voter is $Votes = \set{AB, \notA B, A\notB, \notA\notB}$.
Similarly, we encode a filled space by $F$ and a blank (not filled) space by $B$. For instance, a $\set{Blank, Filled}$ ballot is denoted by $BF$.
Figure~\ref{fig:threeballot} depicts an example ThreeBallot card and the resulting ballots.

We consider a scenario with two voters and a two issue referendum with issues $A$ and $B$.
Let us call the first voter \textit{the voter}, as she will be the one that we are focusing on in this example. We call the second voter \textit{the other voter}.
During the election, the voter has several choices. The first is what vote she is going to cast.
Then for each possible vote, there are various ways that the ballots can be filled, which will result in different ballot sets.
After that, the voter has the choice of which of the ballots to keep a copy as the receipt.
In our scenario, there exist a coercer who after the election will force the voter to reveal her receipt.
The coercer then tries to infer the actual value of the voter's vote, based on the receipt and the published bulletin board.

Table~\ref{table:threeballot} shows the different ways how the choices of the voter affect the possible indistinguishability set of the coercer about the value of the vote, after the receipt has been revealed.
The different indistinguishability sets in each row result from various ways in which the other voter might fill his ballot.

\subsection{Analysis: Epistemic Security}\label{sec:epist-security}

The coercion resistance security property is usually framed following the idea that the coercer cannot \emph{get to know} how the voter has voted, even if the voter cooperates with the coercer.
In ~\cite{Tabatabaei16expressing}, various nuances of coercion resistance are formulated in the logic \ATLK.
In a similar way, we can use \ATLK to express coercion resistance in our ThreeBallot example as follows:
\begin{center}
    $\bigwedge_{V_i\in Votes} \lnot\coop{v,c}\Sometm(V_1 = V_i \wedge (V_1 \neq V_2 \rightarrow \knows_c (V_1 = V_i)))$
\end{center}
The formula states that for any vote choice, there exists no common strategy for the voter and the coercer, such that the voter selects that vote and given that the choice of the two voters are different (the reason for this condition is explained below), the coercer would know the value of the vote.\footnote{
  Note that we use $(V_1 = V_i)$ in the role of an atomic proposition which evaluates to true whenever $V_1$ is indeed equal to $V_i$. Condition $(V_1 \neq V_2)$ is treated analogously. }

It is obvious that in the cases where both voters have voted identically, even without revealing the receipt, the coercer will know the value of the vote just by looking at the bulletin board.
 This is similar to the case when in an election all the voters vote similarly, in which case the privacy of their votes will be broken after publishing the tally (unless some sort of obfuscation is used~\cite{Jamroga19tallies}).
 We added the condition $(V_1 \neq V_2)$ to the above formula to account for this. Also in the following we only focus on the cases where the two voters has voted differently.

 By looking at Table~\ref{table:threeballot} we can see that the model satisfies the coercion resistance property as formulated in the above \ATLK expression.
 This is because there is no row in the table that consists of only one indistinguishability set for the coercer which has only one member (the actual value of the vote).
 However the voter votes and selects the receipt, there is at least one possible indistinguishability set with more than one member, meaning that the coercer might not get to know the actual vote of the voter.

\subsection{Information-Theoretic Security in \ATLH}\label{sec:info-security}

We can alternatively define the coercion resistance property in the information-theoretic sense, namely that the coercer cannot \emph{gain information} on how the voter has voted, even if the voter cooperates with the coercer.
Phrasing this differently, we want that no matter the course of actions of the voter and coercer, the coercer has always maximum uncertainty about the actual value of the vote.
We can express this property in \ATLH as follows:

\begin{center}
    $\bigwedge_{V_i\in Votes} \lnot\coop{v,c}\Sometm(V_1 = V_i \wedge (V_1 \neq V_2 \rightarrow \uncertain^{=\log(\vert Votes \vert)}_c \set{V_A, V_B}))$
\end{center}

The above formula states that, for any joint strategy of the coercer and the voter, the uncertainty of the coercer will always be at the maximum.
Looking at Table~\ref{table:threeballot}, we can see that the ThreeBallot protocol does not satisfy this property. This is because in each row there exists a possible indistinguishability set whose size is less than the number of possible votes.

This example shows that, although ThreeBallot could be considered secure with respect to the epistemic notion of coercion resistance expressed in \ATLK,
it is not secure when we define the security requirement as an information-theoretic property, and formalize it in \ATLH.

\section{Conclusion}\label{sec:conclusion}

In this work, we introduce the logic \ATLH which extends alternating-time temporal logic \ATL with quantitative modalities based on the Hartley measure of uncertainty.
As the main technical result, we show that \ATLH has the same expressive power and the same model checking complexity as \ATLK (i.e., \ATL with epistemic modalities), but it is exponentially more succinct.

The succinctness result, together with the model checking complexity, is of major significance.
As we have seen in Section~\ref{sec:mcheck}, both \ATLK and \ATLH have the same verification complexity with respect to the size of the model and \textit{the length of the formula}.
Theorem~\ref{theorem:succinctness} promises that, for some properties, their verification in \ATLH will be exponentially faster than in \ATLK.
Also, a more succinct language often results in better readability, which in turn helps the designers of a system to make less mistakes in the specification of desired properties.
Last but not least, many properties can be expressed in \ATLH in a much more intuitive way than in \ATLK. Understanding the information-theoretic intuition of a corresponding \ATLK formula can be a real challenge.

We suggest the specification of security requirements as an important application area for our proposal. In particular, the framework can be used to expose the logical structure of security claims, for example, the difference between the epistemic and information-theoretic notions of privacy. We demonstrate this on a real-life voting scenario involving the ThreeBallot protocol, which has been both claimed secure and insecure in the past. Indeed, the protocol is secure with respect to an epistemic notion of privacy, but it may fail to obtain the information-theoretic one.

In the future, we plan to implement model checking for \ATLH as an extension of the STV~\cite{Kurpiewski21stv-demo} or MCMAS~\cite{Lomuscio17mcmas} model checkers.

\begin{acks}
The work has been supported by NCBR Poland and FNR Luxembourg under the PolLux/FNR-CORE projects STV (POLLUX-VII/1/2019 \& C18/IS/12685695/IS/STV/Ryan) and SpaceVote (POLLUX-XI/14/\-SpaceVote/2023).
\end{acks}

\bibliographystyle{ACM-Reference-Format}

\extended{
  \appendix
  \section{Detailed Proofs}

\subsection*{Proof of Theorem~\ref{theorem:expessivity}}
\begin{theorem*}
    \ATEL is at least as expressive as \ATLH.
\end{theorem*}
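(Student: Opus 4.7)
The plan is to proceed by structural induction on \ATLH formulas, constructing for each formula $\varphi$ an \ATEL formula $T(\varphi)$ with $M,q \models_{\uncertain} \varphi$ iff $M,q \models_{\knows} T(\varphi)$. The atomic case is trivial, and Boolean, strategic and temporal operators are common to both languages with the same semantics, so the translation is simply homomorphic in those cases. The only substantive case is the uncertainty modality $\uncertain_a^{\otimes m}\beta$, which must be eliminated in favour of epistemic operators.

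For that case, suppose $\beta = \{\varphi_1,\ldots,\varphi_n\}$ and, by the inductive hypothesis, each $\varphi_i$ has an equivalent \ATEL formula $\psi_i$. Generalising the $n=2$ construction given in the body of the paper, for every subset $S \subseteq \{1,\ldots,n\}$ I would define the characteristic conjunction $\chi_S = \bigwedge_{i \in S} \psi_i \wedge \bigwedge_{i \notin S} \lnot \psi_i$. The $2^n$ sets $[\chi_S]$ partition $\States$ according to the joint valuation of $\psi_1,\ldots,\psi_n$, so intersecting each with $[q]_{\sim_a}$ yields exactly the equivalence classes of $\sim_a^\beta$ that lie inside $[q]_{\sim_a}$; that is, the members of $R_{a,q}(\beta)$. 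Consequently $|R_{a,q}(\beta)|$ equals the number of $S$ for which $[\chi_S]_a^q \neq \emptyset$, which in epistemic terms is the number of $S$ with $M,q \models \lnot \knows_a \lnot \chi_S$.

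It then remains to turn the numerical comparison $\log k \otimes m$ into a Boolean combination. For a fixed $k \in \{1,\ldots,2^n\}$, the statement ``exactly $k$ of the classes $[\chi_S]_a^q$ are non-empty'' is captured by the disjunction, over all $\mathcal{F} \subseteq 2^{\{1,\ldots,n\}}$ with $|\mathcal{F}| = k$, of $\bigwedge_{S \in \mathcal{F}} \lnot \knows_a \lnot \chi_S \wedge \bigwedge_{S \notin \mathcal{F}} \knows_a \lnot \chi_S$, exactly mirroring the structure of $P(\varphi_1,\varphi_2)$ in the example. The full translation of $\uncertain_a^{\otimes m}\beta$ is then the disjunction of these ``exactly $k$'' formulas ranging over all $k$ with $\log k \otimes m$ (using $k \geq 1$, since $q \in [q]_{\sim_a}$ guarantees at least one non-empty class).

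The main obstacle is bookkeeping rather than any deep argument: I need to verify that the $\chi_S$ really do induce the partition $R_{a,q}(\beta)$, that the case analysis over $k$ is both exhaustive and disjoint, and that the induction still goes through when the $\varphi_i$ contain nested uncertainty operators. The last point is handled cleanly by the inductive hypothesis, which guarantees that each $\psi_i$ is already a pure \ATEL formula, so $\chi_S$ and hence $T(\uncertain_a^{\otimes m}\beta)$ lie in \ATEL. The resulting formula is exponentially (in fact doubly exponentially) larger than the original, but this is unavoidable and is precisely what Section~\ref{sec:succinctness} will exploit.
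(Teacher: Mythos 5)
Your proposal is correct and follows essentially the same route as the paper's proof: the paper likewise reduces everything to translating $\uncertain_a^{\otimes m}\beta$, forms the $2^n$ characteristic conjunctions over $\beta$ (its set $\varPhi_\beta$ is exactly your family $\chi_S$), expresses (non-)emptiness of each class via $\knows_a\lnot$, and captures ``exactly $k$ classes non-empty'' by a disjunction over all size-$k$ selections, finally disjoining over the admissible $k$ for the given comparison. Your explicit inductive framing and the remark that nested $\uncertain$ operators are discharged by the inductive hypothesis is a slightly more careful packaging of what the paper leaves implicit, but it is not a different argument.
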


Because the set of formulas defined in $\ATEL$ includes all the formulas defined in $\ATLH$, except the formulas
including $\uncertain$ operator, and the semantics of the common formulas are similar in both logics,
we formulate the proof by giving a translation from an \ATHL formula of type $\uncertain_a^{\otimes m}\beta$ to an equivalent \ATEL formula.
Before giving the translation, we introduce some helper definitions:
For a given formula $\varphi$, let ${(\lnot)}^0:=\lnot \varphi$ and  ${(\lnot)}^1:=\varphi$.

\begin{definition}\label{def:partitioningSet}
  If $\beta = \set{\varphi_1, ...\varphi_n}$ is a set of $n$ formulas, we define the set of formulas $\varPhi_\beta$ as
  \begin{center}
    $\varPhi_\beta = \set{(\lnot)^{t_1}\varphi_1 \wedge (\lnot)^{t_2}\varphi_2 \wedge \  ... \  \wedge (\lnot)^{t_n} \varphi_n \, \mid \, t_i\in \set{0,1}}$
  \end{center}
\end{definition}

Each member of $\varPhi_\beta$ is formula which is constructed by the conjunction of $n$ terms, each term being either a member of $\beta$ or its negation.
Because $\beta$ has $n$ members, the size of $\varPhi_\beta$ is $2^n$.
It can be seen that for any $q\in\States$ and $a\in\Agt$, the indistinguishablity classes of members of $\varPhi_\beta$ with respect to $q$ and $a$ (Definition.\ref{def:indist1})
are mutually exclusive and they partition $[q]_{\indist_a}$ (with the same reasoning as explained in Section~\ref{sec:unc-as-knowledge}).

\begin{definition}
  For $m,n\in\mathbb{N}$, and $m \leqslant 2^n$, we define the set $T_{n,m}$ as follows:
  \begin{center}
    $T_{n,m} = \set{(s_1,s_2,...,s_{2^n})\mid \forall i . s_i\in\set{0,1} ,\, \sum_{i = 1}^{2^n}s_i = 2^n - m}$
  \end{center}
\end{definition}
$T_{n,m}$ is the set of $\binom{2^n}{m}$ possible orderings of $0$s and $1$s, such that the number of $1$s is $2^n-m$ and the number of $0$s is $m$.
If $t\in T_{n,m}$, then we refer to $j$th element of $t$ by $t(j)$ (which of course could be either $0$ or $1$).

\begin{proposition}\label{prop:expressivity}
  If $\beta = \set{\varphi_1, ...\varphi_n}$ is a set of $n$ formulas
  then for a model $M$, $q\in \States$ and $a\in\Agt$ we have
  \begin{center}
    $M,q\models \uncertain_a^{=\log m}\beta \ \Longleftrightarrow \ M,q\models_{\knows}P_m(\varphi_1,...,\varphi_n)$
  \end{center}

  where $1\leq m \leq 2^n$ and $P_m(\varphi_1, ..., \varphi_n)$ is defined as below:
  \begin{center}
    $P_m(\varphi_1, ..., \varphi_n) = \bigvee_{t_i\in T_{n,m}}((\lnot)^{t_i(1)}\knows_a\lnot \alpha_1 \wedge (\lnot)^{t_i(2)}\knows_a\lnot \alpha_2 \wedge ... \wedge (\lnot)^{t_i(2^n)}\knows_a\lnot \alpha_{2^n})$,
  \end{center}
  and $(\alpha_1, \alpha_2,..., \alpha_{2^n})$ is any ordering on the members of $\varPhi_\beta$.

  \begin{proof}
    The idea of the proof is similar to the example in Section~\ref{sec:unc-as-knowledge}.
    By the way we constructed $\varPhi_\beta$, we have that the members of the set $\omega =\set{[\alpha_i]_a^q\mid 1\leqslant i\leqslant 2^n}$ are mutually exclusive,
    and they partition $[q]_{\indist_a}$. Therefore $M,q\models \uncertain_a^{=\log m}\beta$ if and only if exactly $m$ members of the set $\omega$ are non-empty.
    For any $i$, we have that $[\alpha_i]_a^q$ is empty if and only if $M,q\models \knows_a \lnot\alpha_i$, and similarly $[\alpha_i]_a^q$ is non-empty if and only if $M,q\models \lnot \knows_a \lnot\alpha_i$.
    Therefore $M,q\models \uncertain_a^{=\log m}\beta$ if and only if for some subsets $\varPhi_\beta'\subseteq \varPhi_\beta$ such that $|\varPhi_\beta'| = m$
    , we have that $\alpha\in \varPhi_\beta' \Rightarrow M,q\models \lnot \knows_a \lnot\alpha_i$
    and $\alpha\in \varPhi_\beta\setminus\varPhi_\beta' \Rightarrow M,q\models \knows_a \lnot\alpha_i$.
    This statement could be exactly represented by the formula $P_m(\varphi_1, ..., \varphi_n)$ defined as above.
  \end{proof}

  For formulas of type $\uncertain_a^{<\log m}\beta$ and $\uncertain_a^{>\log m}\beta$ the construction could be written as below:
  \begin{center}
    $M,q\models \uncertain_a^{<\log m}\beta \ \Longleftrightarrow \ M,q\models_{\knows}P_1(\varphi_1,...,\varphi_n) \vee ... \vee P_{m-1}(\varphi_1,...,\varphi_n)$
    $M,q\models \uncertain_a^{>\log m}\beta \ \Longleftrightarrow \ M,q\models_{\knows}P_{m+1}(\varphi_1,...,\varphi_n) \vee ... \vee P_{2^n}(\varphi_1,...,\varphi_n)$
  \end{center}
  where $P(N)$ is defined as in Proposition~\ref{prop:expressivity}.

\end{proposition}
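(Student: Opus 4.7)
The plan is to reduce the Hartley-measure condition to a combinatorial count over the blocks $[\alpha_i]_a^q$, and then to match that count against the structure of $P_m$. Three ingredients drive the argument: (i) the blocks partition the epistemic neighbourhood, (ii) each block counts as one equivalence class of $\sim_a^\beta$ exactly when it is non-empty, and (iii) emptiness of a block is expressible by $\knows_a\lnot\alpha_i$.

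First I would establish that $\{[\alpha_1]_a^q,\dots,[\alpha_{2^n}]_a^q\}$ is a (possibly partial) partition of $[q]_{\indist_a}$. Each $\alpha_i$ is a complete conjunction fixing the truth value of every $\varphi_j\in\beta$, so the $\alpha_i$'s are pairwise unsatisfiable together and exhaust every possible joint valuation of $\beta$. Consequently every $q'\in[q]_{\indist_a}$ lies in exactly one block. I would then observe that two states in $[q]_{\indist_a}$ sit in the same block iff they agree on every $\varphi_j$, which is precisely the relation $\sim^\beta$. Combined with the fact that both states are already $\sim_a$-related to $q$, this yields a bijection between the non-empty blocks and the equivalence classes of $\sim_a^\beta$ contained in $[q]_{\indist_a}$. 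Hence $|R_{a,q}(\beta)|$ equals the number of non-empty $[\alpha_i]_a^q$.

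Next I would translate emptiness into knowledge. By Definition~\ref{def:indist1}, $[\alpha_i]_a^q=[\alpha_i]\cap[q]_{\indist_a}$, so $[\alpha_i]_a^q=\emptyset$ iff no $\indist_a$-successor of $q$ satisfies $\alpha_i$, iff $M,q\models_\knows\knows_a\lnot\alpha_i$. Dually, $[\alpha_i]_a^q\neq\emptyset$ iff $M,q\models_\knows\lnot\knows_a\lnot\alpha_i$. With this in hand, $M,q\models\uncertain_a^{=\log m}\beta$ is equivalent to the purely combinatorial statement that exactly $m$ of the $2^n$ blocks are non-empty, which in turn is equivalent to: there exists a subset $S\subseteq\{1,\dots,2^n\}$ with $|S|=m$ such that $M,q\models\lnot\knows_a\lnot\alpha_i$ for $i\in S$ and $M,q\models\knows_a\lnot\alpha_i$ for $i\notin S$.

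Finally I would match this existential characterisation to $P_m$. Unfolding the notation $(\lnot)^0\,\chi=\lnot\chi$ and $(\lnot)^1\,\chi=\chi$, the disjunct of $P_m$ indexed by $t\in T_{n,m}$ asserts $\lnot\knows_a\lnot\alpha_j$ at every position $j$ with $t(j)=0$ and $\knows_a\lnot\alpha_j$ at every position with $t(j)=1$. Since $t$ ranges over tuples with exactly $m$ zeros and $2^n-m$ ones, the disjuncts of $P_m$ are in bijection with the subsets $S$ of size $m$ used above. The two characterisations therefore coincide, and the biconditional follows. The only delicate point is bookkeeping: verifying that the convention $t(j)=0$ encodes ``non-empty block'' under the $(\lnot)^{t(j)}$ notation, and that $|T_{n,m}|=\binom{2^n}{m}$ indeed exhausts all $m$-subsets; once these are pinned down, the rest is a direct unfolding of semantics.
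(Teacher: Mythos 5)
Your proposal is correct and follows essentially the same route as the paper's proof: partition $[q]_{\indist_a}$ into the blocks $[\alpha_i]_a^q$, characterise $\uncertain_a^{=\log m}\beta$ as ``exactly $m$ blocks are non-empty,'' translate emptiness of a block into $\knows_a\lnot\alpha_i$, and match the resulting existential statement over $m$-subsets to the disjuncts of $P_m$. Your version is slightly more explicit than the paper's on two points it glosses over --- the bijection between non-empty blocks and the classes in $R_{a,q}(\beta)$, and the check that $t(j)=0$ in $T_{n,m}$ encodes a non-empty block --- but these are elaborations of the same argument, not a different one.
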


\subsection*{Proof of Theorem~\ref{theorem:succinctness}}

For introducing the proof of Theorem~\ref{theorem:succinctness} we need some prerequisites which will follow.

\begin{lemma}\label{lemma:simplify1}
    In any concurrent game model $M$, if the outcome function contains only reflexive transitions, then for any $A\subseteq\Agt$ and any state $q$:
    \begin{itemize}
        \item $M,q\models \coop{A}\Next\varphi$ iff $M,q\models E_A\varphi$.
        \item $M,q\models \coop{A}\Always\varphi$ iff $M,q\models E_A\varphi$.
        \item $M,q\models \coop{A}\psi\Until\varphi$ iff $M,q\models E_A\varphi$.
    \end{itemize}
\end{lemma}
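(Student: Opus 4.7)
The plan is to exploit the fact that reflexive-only transitions force every outcome to be a constant sequence. Under this assumption, for any state $q'$ and any strategy $s_A \in \Sigma_A^{ir}$, we have $out(q', s_A) = \{q' q' q' \cdots\}$, since every joint action loops back to $q'$. Consequently the choice of strategy is irrelevant for determining truth of a formula $\coop{A}\gamma$: what matters is only which starting states the imperfect-information semantics requires us to examine.

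My first step would be to make this observation precise and to identify the quantification carried out by $\coop{A}$ under the subjective imperfect-information reading described informally in Section~\ref{sec:imperfect}: the strategy must enforce the temporal goal from every state that some agent in $A$ considers possible, i.e.\ from every $q' \in \bigcup_{a \in A} [q]_{\sim_a}$. This is precisely the set of worlds quantified over by $E_A \varphi = \bigwedge_{a \in A} \knows_a \varphi$, so the two sides of each equivalence will end up ranging over the same states.

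From this common setup each of the three cases follows by unfolding the relevant temporal operator along a constant path. For $\coop{A}\Next\varphi$, the unique path from $q'$ satisfies $\lambda[1] = q'$, so the modality reduces to requiring $\varphi$ at every $q'$ in the epistemic union, which is exactly $E_A \varphi$. For $\coop{A}\Always\varphi$, since $\lambda[i] = q'$ for all $i \geq 0$, the infinite conjunction collapses to $M, q' \models \varphi$. For $\coop{A}\psi\Until\varphi$, picking the witness position $i = 0$ vacuously discharges the $\psi$-premise, so once again the condition simplifies to $\varphi$ at $q'$ and $\psi$ plays no role.

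The only subtlety I anticipate is a sanity check on the existence of the strategy: $\Sigma_A^{ir}$ must be non-empty. Since $d(a, q)$ is required to be non-empty and $\sim_a$ partitions $\States$, a uniform action assignment can always be produced by picking one admissible action per $\sim_a$-class. With this in place, all three equivalences follow from the same two-line reduction, parameterised only by the choice of temporal operator; the real work is concentrated in the single observation that constant paths trivialise both the strategic quantifier and the temporal unfolding simultaneously.
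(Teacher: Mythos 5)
Your proposal is correct and follows essentially the same route as the paper's (much terser) proof: reflexive-only transitions make every outcome a constant path, so each temporal operator collapses to ``$\varphi$ now,'' and the subjective imperfect-information quantification over $\bigcup_{a\in A}[q]_{\sim_a}$ turns the strategic modality into $E_A\varphi$. You are in fact more careful than the paper in two respects --- explicitly invoking the subjective reading of $\coop{A}$ from Section~\ref{sec:imperfect} (without which the equivalence with $E_A\varphi$ would fail) and checking that a uniform strategy exists --- but these are elaborations of the same argument, not a different one.
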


\begin{proof}
    The proof follows directly from the semantics of the temporal operators and the fact that for any strategy $s_A\in\Sigma_A^{ir}$ and any $\lambda\in out(q,s_A)$,
    we have that $\lambda[i] = q$ for any $i\geq 0$. So in a sense the agents can never ''escape'' from the state $q$.
    As the indistinguishability relation of the agents are also reflexive, it follows that the strategic abilities of agents to bring about $\varphi$ in any future state is equivalent to having mutual knowledge about $\varphi$ in the current state.
\end{proof}

\begin{observation}\label{obs:simplify2}
    In any concurrent game model $M$, if the outcome function contains only reflexive transitions and $\Agt$ is a singleton (for example $\Agt=\set{a}$),
    then any formula in the language of \ATEL has an equivalent formula in Multiagent epistemic logic which does not have a higher length.
    This follows from the lemma~\ref{lemma:simplify1} and that in this model $E_A$ is equivalent to $\knows_a$.
\end{observation}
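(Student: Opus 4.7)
The plan is a structural induction on the \ATEL formula $\varphi$, producing an MEL formula $T(\varphi)$ such that (i) $M,q\models_{\knows}\varphi \Leftrightarrow M,q\models_{\knows}T(\varphi)$ for every pointed model in the restricted class (one agent, all transitions reflexive), and (ii) $|T(\varphi)|\leq|\varphi|$. For atomic propositions, the Boolean connectives, and the knowledge modality $\knows_a$, $T$ retains the outermost operator and recurses on subformulas; these cases are immediate and length-preserving.

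The interesting cases are the three strategic-temporal constructors $\coop{A}\Next\psi$, $\coop{A}\Always\psi$, and $\coop{A}\psi_1\Until\psi_2$. Since $\Agt=\set{a}$, the coalition $A$ is either $\set{a}$ or $\emptyset$. When $A=\set{a}$, Lemma~\ref{lemma:simplify1} gives semantic equivalence with $E_{\set{a}}\psi$ (respectively $E_{\set{a}}\psi_2$ in the Until case), which equals $\knows_a\psi$ (resp.\ $\knows_a\psi_2$); I therefore set $T(\coop{\set{a}}\Next\psi)=T(\coop{\set{a}}\Always\psi)=\knows_a T(\psi)$ and $T(\coop{\set{a}}\psi_1\Until\psi_2)=\knows_a T(\psi_2)$. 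When $A=\emptyset$, the unique outcome from any state $q$ is the constant-$q$ loop, so each of the three formulas reduces to its innermost formula holding at $q$, and I define $T$ to be $T(\psi)$ (respectively $T(\psi_2)$).

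For the length bookkeeping, Definition~\ref{def:formulasize} gives $|\coop{A}\Next\psi|=|\coop{A}\Always\psi|=|A|+1+|\psi|$ and $|\coop{A}\psi_1\Until\psi_2|=|A|+1+|\psi_1|+|\psi_2|$. For $A=\set{a}$, the translated formula has size $1+|T(\psi)|\leq 1+|\psi|\leq 2+|\psi|=|\coop{\set{a}}\Next\psi|$ in the $\Next$ and $\Always$ subcases, and $1+|T(\psi_2)|\leq 1+|\psi_2|\leq 2+|\psi_1|+|\psi_2|$ in the Until subcase (using $|\psi_1|\geq 1$). For $A=\emptyset$, the translation has size at most $|\psi|<1+|\psi|$, respectively at most $|\psi_2|<1+|\psi_1|+|\psi_2|$. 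Combined with the trivially length-preserving Boolean and knowledge cases, an induction closes with $|T(\varphi)|\leq|\varphi|$ for all $\varphi$.

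The main subtlety is the $A=\emptyset$ subcase, which is not literally covered by Lemma~\ref{lemma:simplify1}: the suggested right-hand side $E_\emptyset\psi$ is vacuously true, whereas $\coop{\emptyset}\Next\psi$ genuinely constrains $\psi$ at $q$. One must therefore argue separately from reflexivity that the unique path from $q$ is the constant loop, so each temporal operator collapses to an atomic constraint on its innermost subformula at $q$. This is strictly easier than the argument behind Lemma~\ref{lemma:simplify1}, and once it is made the remainder of the proof is a routine structural induction.
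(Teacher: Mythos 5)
Your proof is correct and follows essentially the same route as the paper: the observation is justified there by a one-line appeal to Lemma~\ref{lemma:simplify1} together with the identification of $E_{\set{a}}$ with $\knows_a$, and your structural induction with the explicit length bookkeeping against Definition~\ref{def:formulasize} is just the fully spelled-out version of that argument. The one place you go beyond the paper is the $A=\emptyset$ subcase, and your care there is warranted rather than pedantic: taken literally, Lemma~\ref{lemma:simplify1} would equate $\coop{\emptyset}\Always\varphi$ with the vacuously true $E_\emptyset\varphi$, which fails when $\varphi$ is false at $q$, so your separate reflexivity argument collapsing the empty-coalition modalities to the inner formula at the current state is a small but genuine repair of the paper's justification.
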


\begin{definition}\label{def:Mn}{\ \\}
    For $n\in\mathbb{N}$, we define $M^n = \set{\Agt^n, \States^n, \set{\indist_a \mid a\in\Agt}, \Act^n, d^n, \outcome^n, \atomicP^n, \valualtion^n}$ to be a concurrent epistemic game model,
    such that $\Agt^n = \set{a}$, $\States^n = \set{0,..., 2^n-1}$, $\atomicP^n = \set{p_1, ..., p_n}$ and $s\indist_a t, \forall s,t\in\States$.
    We allow $\Act^n$, $d^n$ to be defined arbitrarily, and the transition function $\outcome^n$ is reflexive.
    The valuation function $\valualtion^n$ is defined as follows: If $d_n d_{n-1} ...d_1$ is the binary representation
    of the number $t\leqslant 2^n-1$, then in state $t\in\States$, we let $p_i$ be true iff $d_{i-1} = 1$.
    In other words, we let the set of $n$ atomic propositions $\atomicP^n$ have a distinct valuation out of $2^n$ possible combinations in each of $2^n$ states.
  \end{definition}

  \begin{definition}\label{def:Nnj}{\ \\}
    For a given $n,j\in\mathbb{N}$, we define $N^n_j$ to be constructed by removing state $j$ from $M^n$.
    We also modify $\indist^n$, $\Act^n$, $d^n$, $\valualtion^n$ and $\outcome^n$ as needed to be compatible be the removing of the state.
  \end{definition}

  We denote the set of states of $N^n_j$ by $\States^n_j$.
  We define $\widetilde{N}^n_j$ to be the set of all pointed models $(N^n_j,s)$, where $s\in\States^n_j$.
  Therefore $\widetilde{N}^n_j$ contains $2^n-1$ pointed models, corresponding to each of $2^n-1$ states of $N^n_j$.
  For any $t\in \States^n$, we define $(M^n, t)^+$ to be any set of pointed models that contains $(M^n, t)$. Similarly $(N^n_j, t)^+$ could be any set
  of pointed models that contain $(N^n_j, t)$ (therefore, these terms are not referring to specific sets, but rather they show a property of a set of pointed models).

  \begin{observation}\label{obs:obs1}
    If $n>1$, then $\forall p_i\in\atomicP^n, \forall s\in\States^n$, there exists $t\in\States^n, t\neq s$ such that $s$ and $t$ have same valuation for $p_i$.
    This is true, because for $n>1$, $M^n$ has at least 4 states, and each $p_i$ holds in at least 2 of them, and does not hold in at least 2 other states.
  \end{observation}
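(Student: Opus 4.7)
The plan is a direct counting argument based on unpacking the valuation given in Definition~\ref{def:Mn}. For each atomic proposition $p_i \in \atomicP^n$, I would determine exactly how many states in $\States^n$ agree with a fixed state $s$ on the truth value of $p_i$, and then observe that for $n > 1$ this count is strictly greater than $1$, which forces the existence of a second state $t \neq s$ with the same valuation of $p_i$.

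First I would fix $p_i$ and $s \in \States^n$ arbitrarily. By Definition~\ref{def:Mn}, a state $t \in \States^n = \{0, \ldots, 2^n-1\}$ satisfies $p_i$ precisely when the $(i-1)$-th bit of the binary expansion of $t$ equals $1$. Let $b \in \{0,1\}$ be that bit for $s$. Then the set of states agreeing with $s$ on $p_i$ is exactly the set of $t \in \{0,\ldots,2^n-1\}$ whose $(i-1)$-th bit equals $b$; since the remaining $n-1$ bits can be chosen freely and independently, this set has cardinality exactly $2^{n-1}$.

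The main and only step is then to note that for $n > 1$ we have $2^{n-1} \geq 2$, so in addition to $s$ itself this set contains at least one further state $t \neq s$, which is what the observation claims. As this argument is uniform in the choice of $p_i$ and $s$, it yields the universal statement. I do not foresee any technical obstacle: the claim is a direct consequence of the definition of $\valualtion^n$ via binary expansions, and the only subtlety is ensuring that the counting of ``remaining bits'' is correct, which is unambiguous once the bit positions $d_n d_{n-1} \ldots d_1$ from Definition~\ref{def:Mn} are fixed.
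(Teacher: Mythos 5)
Your argument is correct and follows essentially the same route as the paper's one-line justification: the paper notes that for $n>1$ each $p_i$ holds in at least 2 states and fails in at least 2 states, which is exactly your observation that fixing one bit leaves $2^{n-1}\geq 2$ states on each side. Your version merely makes the count exact rather than a lower bound; no substantive difference.
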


  \begin{observation}\label{obs:obs2}
    If $(M^n,t)\in A$ and $(N^n_j,t)\in B$, then the spoiler cannot play an \textit{atomic move} in a node which is labeled by $\pairModels{A}{B}$ or $\pairModels{B}{A}$.
    Because state $t$ prevents any $p\in\atomicP^n$ to hold on all models on the left side and to not hold on all models on the right side.
  \end{observation}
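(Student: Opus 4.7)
The plan is to unpack the atomic move rule and show that, under the hypothesis of the observation, no atomic proposition can witness it. Recall that an atomic move at a node labeled $\pairModels{C}{D}$ requires the spoiler to pick some $p\in\atomicP$ with $C\models p$ and $D\models \lnot p$, i.e., every pointed model in $C$ makes $p$ true while every pointed model in $D$ makes $p$ false. I want to show that the simultaneous presence of $(M^n,t)$ on one side and $(N^n_j,t)$ on the other rules this out.

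The key auxiliary fact is that the two pointed models $(M^n,t)$ and $(N^n_j,t)$ agree on the valuation of every atomic proposition in $\atomicP^n$. This follows directly from Definition~\ref{def:Nnj}: $N^n_j$ is obtained from $M^n$ by deleting the single state $j$ and adjusting $\indist^n$, $d^n$, $\outcome^n$, $\valualtion^n$ only as needed to accommodate that removal; on the surviving states the valuation is inherited unchanged from $\valualtion^n$. In particular, since $t\in\States^n_j$ (as witnessed by $(N^n_j,t)\in B$), we have $t\neq j$, and hence $(M^n,t)\models p$ iff $(N^n_j,t)\models p$ for every $p\in\atomicP^n$.

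With this in place, the conclusion is immediate. Suppose the node is labeled $\pairModels{A}{B}$ and the spoiler attempts to play some $p\in\atomicP^n$. The condition $A\models p$ forces $(M^n,t)\models p$, while $B\models \lnot p$ forces $(N^n_j,t)\not\models p$, contradicting the agreement established above. The symmetric case of a node labeled $\pairModels{B}{A}$ follows by swapping the roles of the two pointed models in the same argument. The only substantive point to check carefully is that the removal-of-a-state operation in Definition~\ref{def:Nnj} really does preserve the valuation at all surviving states; once that is made explicit, no genuine obstacle remains.
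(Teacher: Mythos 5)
Your proof is correct and follows the same reasoning the paper gives in its one-line justification: the pointed models $(M^n,t)$ and $(N^n_j,t)$ assign identical valuations to every $p\in\atomicP^n$ (since $t\neq j$ and the valuation of $N^n_j$ is inherited from $M^n$ on surviving states), so no atomic proposition can hold on all models of one side while failing on all models of the other. You have merely made explicit the details that the paper leaves implicit; there is no substantive difference.
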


  We will use the models $M^n$ and $N^n_j$ to construct our proof.
  Throughout the proof, we will create a formula size game tree.
  As we saw before, every node in a formula size game tree is labeled with a pair of sets of pointed models. Below we define six \textit{types} of nodes,
  parameterized by a given state $j\in \States^n$, based on properties of their labels.

  \begin{itemize}
    \item \sOne:  The label could be represented by $\pairModels{(M^n, t)^+}{(N^n_j, t)^+},\, \exists t\neq j$
          \vspace{6pt}
    \item \sTwo:  The label could be represented by $\pairModels{(N^n_j, t)^+}{(M^n, t)^+},\, \exists t\neq j$
          \vspace{6pt}
    \item \sThree:  The label could be represented by $\pairModels{\widetilde{N}^n_j}{(M_n, j)}$
          \vspace{6pt}
    \item \sFour:  The label could be represented by $\pairModels{(M^n, j)}{\widetilde{N}^n_j}$
          \vspace{6pt}
    \item \sFive:  The label could be represented by $\pairModels{(N^n_j, t)^+}{(M^n, j)},\, \exists t\neq j$
          \vspace{6pt}
    \item \sSix:  The label could be represented by $\pairModels{(M^n, j)}{(N^n_j, t)^+},\, \exists t\neq j$
  \end{itemize}

  Next, we investigate that if in a FSG, a node has one of the above types, what will be the result of each of the spoiler's possible moves.
  In particular, we are interested to see a) whether the spoiler can close a node of a certain type be playing one of her moves, and b) if it is not closed then, what type of child nodes will be generated.

  \vspace{5pt}
  \noindent\sOne, defined by  $\pairModels{(M^n, t)^+}{(N^n_j, t)^+},\, \exists t\neq j$\\
  - \highlightone{\textit{Atomic move} is not possible} (by observation~\ref{obs:obs2}).\\
  - \textit{Or move} will generate at least one child node of type \sOne. Because it splits the set of pointed models on the left side, therefore $(M^n,t)$ will end up
  on the left side of one of the child nodes.\\
  - \textit{Knows move} will generate a child node of type \sOne. Because it expands the pointed models on the left side to $\set{(M^n, s)|\, \forall s\in\States^n}$.
  Therefore, no matter how we generate the right side, we can always select one of the pointed models on the right side which are of form $(N^n_j, t)$ (for some $t\in\States^n_j$)
  and because $t\in\States^n$, therefore $(M^n,t)$ will be included on the left side. Hence the child node is of type \sOne.\\
  - \textit{Not move} will generate a child node of type \sTwo. Because it just reverses the pointed models on right side and left side.\\

  \noindent\sTwo, defined by $\pairModels{(N^n_j, t)^+}{(M^n, t)^+},\, \exists t\neq j$\\
  - \highlightone{\textit{Atomic move} is not possible} (by observation~\ref{obs:obs2}).\\
  - \textit{Or move} will generate at least one child node of type \sTwo. Because it splits the set of pointed models on the left side, therefore $(N^n_j,t)$ will end up
  on the left side of one of the child nodes.\\
  - \textit{Knows move} will generate a child node of either type \sTwo or \sThree. By playing the \textit{knows move} (which could be only $\knows_a$ move), the left side
  of the label in the generated node will be expanded to $\widetilde{N}^n_j$. On the right side, we either select pointed models such that at least for one $(M^n, t)$ among them, $(N^n_j,t)$ exists on the left side,
  in which case the resulting node will be of type \sTwo. Or we select the pointed models on the right side such that for no $(N^n_j,t)$ on the left side there exist
  a $(M^n,t)$ on the right side. This can happen only if the new generated node is labeled with $\pairModels{\widetilde{N}^n_j}{(M_n, j)}$, which means it is of type \sThree.\\
  - \textit{Not move} will generate a child node of type \sOne. Because it just reverses the pointed models on right side and left side.\\

  \noindent\sThree, defined by $\pairModels{\widetilde{N}^n_j}{(M_n, j)}$\\
  - \highlightone{\textit{Atomic move} is not possible}. Because by observation~\ref{obs:obs2}, for any $p\in\atomicP^n$ which is not true in $j$, there is $s\in\States^n_j$ where
  where $p$ does not hold in $s$ too, therefore no atomic proposition could close this node.\\
  - \textit{Or move} will generate at least one child node of either type \sThree or \sFive. If we select the all the pointed models on the left side, to be the left side of
  a child node, then the label of that child node is the same as this node and therefore it is also of type \sThree. Otherwise we can select any of the child nodes generated by \textit{or move}
  and one of the pointed models $(N^n_j, t)$ on its left side. Because $t\neq j$, this node's label will be of type \sFive.\\
  - \textit{Knows move} will generate a child node of either type \sTwo or \sThree. By playing the \textit{knows move} (which could be only $\knows_a$ move), the left side
  of the label in the generated node will remain to be $\widetilde{N}^n_j$. On the right side, we either select $(M^n,t), t\neq j$, in which case $(N^n_j,t)$ exists on the left side,
  and therefore the resulting node will be of type \sTwo. Or we select the pointed model $(M^n,j)$,
  in which case the label of the child node stays intact and hence is of type \sThree.\\
  - \textit{Not move} will generate a child node of type \sFour. Because it just reverses the pointed models on right side and left side.\\

  \noindent\sFour, defined by $\pairModels{(M^n, j)}{\widetilde{N}^n_j}$\\
  - \highlightone{\textit{Atomic move} is not possible}. Because by observation~\ref{obs:obs2}, for every $p\atomicP^n$, there is $s\in\States^n_j, s\neq j$
  such that $p$ does hold in both $s$ and $j$, therefore no atomic proposition could close this node.\\
  - \textit{Or move} will generate a node of type \sFour. This is because the set of pointed models on the left side of this node has only one member, so at least one of the child
  nodes will have a similar label as of this one, which is of type \sFour.\\
  - \textit{Knows move} will generate a child node of \sOne. Because after a \textit{knows move} (which could be only $\knows_a$ move), the left side
  of the label in the generated node will be $\set{(M^n, s)|\ \forall s\in \States^n}$. Therefore no matter how we generate the pairs on the right side, for any chosen pair $(N^n_j, t)$
  on the right side $(M^n, t)$ exists on the left side, and therefore the resulting node would be of type \sOne.\\
  - \textit{Not move} will generate a child node of type \sThree. Because it just reverses the pointed models on right side and left side.\\

  \noindent\sFive, defined by $\pairModels{(N^n_j, t)^+}{(M^n, j)},\, \exists t\neq j$\\
  - \highlighttwo{\textit{Atomic move} is possible}. If it is played, then the node will be closed and there won't be any further child. Notice that we are not saying the for every
  node of type \sFive the \textit{atomic move} is possible. What we state is that by the definition of this type alone, we cannot discard the possibility of an \textit{atomic move},
  and this will be enough for our purpose.\\
  - \textit{Or move} will generate a child node of type \sFive. Because for all $(N^n_j,t)$ on the left side of this node's label, we have that $t\neq j$. Therefore any subset of it also
  retains this property and hence an \textit{or move} will generate at least one child node of type \sFive.\\
  - \textit{Knows move} will generate at least one node of either type \sThree or \sTwo. The reasoning is similar to the \textit{knows move} for the node type \sThree.\\
  - \textit{Not move} will generate a child node of type \sSix. It follows directly from the definitions of node types \sFive and \sSix.\\

  \noindent\sSix, defined by $\pairModels{(M^n, j)}{(N^n_j, t)^+},\, \exists t\neq j$\\
  - \highlighttwo{\textit{Atomic move} is possible}. Again, this only means that only by the definition of type \sSix, we cannot reject the possibility of an \textit{atomic move},
  which will be enough for our proof.\\
  - \textit{Or move} will generate a child node of type \sFive. The reasoning is similar to the \textit{or move} of node type \sFour.\\
  - \textit{Knows move} will generate a child node of type \sOne. The reasoning is similar to the \textit{knows move} of node type \sFour.\\
  - \textit{Not move} will generate a child node of type \sSix. It follows directly from the definitions of node types \sFive and \sSix.\\

  Any node which is tagged with a type \sOne to \sSix defined as above is called a \textit{typed} node.

  \begin{observation}\label{obs:obs3}
    If the spoiler wins a FSG starting from a root of type \sOne,
    then the game tree has at least one leaf of either type \sFive or \sSix (with a singleton set containing $(M^n, j)$ on one side of the label of that leaf).
  \end{observation}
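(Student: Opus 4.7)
The plan is to observe that the move-by-move case analysis just preceding the observation already does all the real work; what remains is a short structural argument. First I would establish the invariant that any descendant of a typed node in a winning game tree is again typed. The preceding analysis shows that, from each of the six types, every available move (\textit{Or}, \textit{Knows}, \textit{Not}) produces at least one child that falls into one of the six types \sOne--\sSix. Starting from a root of type \sOne, therefore, the entire game tree built by the spoiler lives inside the family of typed nodes.

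Second, I would read off from the same analysis that among the six types only \sFive and \sSix admit an \textit{atomic move}; for \sOne, \sTwo, \sThree, \sFour the atomic move is explicitly impossible (these cases follow from Observation~\ref{obs:obs2} and from the fact that in $M^n$ each atomic proposition is true in at least two states and false in at least two states, so no single proposition can separate the two sides of the label). Combined with the fact that a leaf in an FSG can be closed only by an \textit{atomic move}, this forces every closed leaf of a winning game tree rooted at a typed node to itself be of type \sFive or \sSix.

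Third, since the spoiler wins by assumption, every leaf of the tree must be closed. The root of type \sOne is not itself a closeable leaf (no atomic move available), so the tree contains at least one proper leaf, and by the previous paragraph at least one such leaf has type \sFive or \sSix. The parenthetical remark in the statement is then automatic: both \sFive and \sSix are defined so that one side of the label is the singleton $\{(M^n,j)\}$.

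The main obstacle is not any single step but rather the bookkeeping that justifies the closure invariant: I have to be careful that the case analysis already given really exhausts the behaviour of each move and, in particular, that the spoiler cannot \emph{exit} the typed family by cleverly choosing the right-hand partition of an \textit{Or} move or a witness in a \textit{Knows} move. Both points are in fact spelled out in the bullet lists for \sOne--\sSix, so the observation reduces to extracting the single combinatorial consequence that every closed leaf lies in $\{\sOne,\sTwo,\sThree,\sFour,\sFive,\sSix\} \cap \{\text{atomic-playable types}\} = \{\sFive,\sSix\}$.
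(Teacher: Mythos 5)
Your overall route is the same as the paper's: use the preceding case analysis to see that typed nodes propagate downwards, note that only \sFive and \sSix admit an \emph{atomic} move (the only move that closes a node), and conclude that a winning tree rooted at an \sOne node must contain a closed leaf of type \sFive or \sSix. The reading of the parenthetical remark is also right. However, there is one step that does not hold as stated: from ``every move at a typed node produces \emph{at least one} typed child'' you infer that ``the entire game tree lives inside the family of typed nodes,'' and hence that \emph{every} closed leaf is of type \sFive or \sSix. That inference fails. An \textit{Or} move splits the left-hand set $C$ into $C_1\cup C_2$, and the case analysis only guarantees that the piece containing the distinguished pointed model (e.g.\ $(M^n,t)$ for \sOne) yields a typed child; the sibling node need not contain any $(M^n,s)$ or $(N^n_j,s)$ at all and can fall outside the six types. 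So the invariant ``any descendant of a typed node is typed'' is false, and with it the claim that all closed leaves are of type \sFive or \sSix.

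The conclusion you need is only existential, and it is recoverable from the correct weaker statement, which is exactly how the paper argues: starting at the root (type \sOne for the given $j$), repeatedly descend into \emph{a} typed child; since the winning tree is finite and every leaf is closed, this path of typed nodes must terminate at a closed node, and among the six types only \sFive and \sSix can be closed by an atomic move. You should replace the global ``all nodes are typed'' claim by this single-path argument; everything else in your write-up (the impossibility of atomic moves at \sOne--\sFour via Observation~\ref{obs:obs2}, the fact that closure requires an atomic move, and the singleton $\set{(M^n,j)}$ appearing in the labels of \sFive and \sSix) is correct and matches the paper.
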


  To see why, notice that every move that the spoiler plays at each step will generate at least one typed node parameterized by $j$ as a child. So if we create
  a path from the root downwards the tree, containing only typed nodes parameterized by $j$, this path can end only by the node being closed at some point.
  As we assumed that the spoiler wins the FSG, then all paths from the root has to end in a leaf. Among the node types, the spoiler
  can close the node only when the type is \sFive or \sSix. Therefore, any path that contains only of typed nodes parameterized by $j$ has to end in a node of types either \sFive or \sSix.

  Now we have enough prerequisite materials to construct our proof:
  \begin{proof}{\textit{(Proof of Theorem~\ref{theorem:succinctness})}}\label{proof:succinctness}\\

    For every $n\in\mathbb{N}$, we define the formula $\varphi_n\in Lang(ATHL)$ as:
    \begin{center}
      $\varphi_n = \uncertain_a^{=n} \beta_n$
    \end{center}
    where $\beta_n = \set{p_1, ..., p_n}$. The size of $\varphi_n$ is $f(n) = n$, which is strictly increasing (and linear) in $n$.
    Also for every $n\in\mathbb{N}$, we define two set of pointed models $A_n$ and $B_n$ as below:

    \begin{center}
      $A_n = \set{(M^n,0)}$,
      $B_n = \set{(N^n_j, 0)\mid j\in \mathbb{N},\, 1\leqslant j \leqslant 2^n-1}$
    \end{center}

    where $M^n$ and $N^n_j$ are defined as in Definitions~\ref{def:Mn} and \ref{def:Nnj}. So $A_n$ contains a single pointed model $(M^n, 0)$,
    and $B_n$ is constructed by summing all possible ways that we can remove one of the states of $(M^n, 0)$, other than state 0. Therefore $B_n$ consists of $2^n-1$ pointed models.

    It holds that $A_n\models_{\uncertain}\varphi_n$, because $M^n$ has $2^n$ states and the set $\beta_n = \atomicP^n$ has a distinct combination of values
    in each of those $2^n$ states, which are all in the same indistinguishability class for agent $a$ from state $0$.
    Therefore $|R_{a,0}(\beta)| = 2^n$, and so $M,0\models_{\uncertain}\uncertain_a^{=n}\beta_n$.
    It is also true that $B_n\models_{\uncertain}\lnot\varphi_n$, because for all $j\in\mathbb{N}$, $N^n_j$ has $2^n-1$ states, and so regardless of the
    valuations of atomic propositions in those states, $|R_{a,0}(\beta)| \leqslant 2^n-1 < 2^n$. Hence $\log{|R_{a,0}(\beta)|}<\log{2^n} = n$. It follows that
    $(N^n_j, 0)\models_{\uncertain}\uncertain_a^{<n}$, which means $(N^n_j, 0)\models_{\uncertain}\lnot\uncertain_a^{=n}$.

    So we have $A_n\models_{\uncertain}\varphi_n$ and $B_n\models_{\uncertain}\lnot\varphi_n$. Now to prove the succinctness theorem, by definition~\ref{def:succinctness}
    we need to show that the shortest formula $\psi_n\in Lang(ATEL)$ which is equivalent to $\varphi_n$, has the size of at least $2^n$.
    Assume $\psi_n'$ is the shortest formula in $Lang(ATEL)$ which is equivalent to $\varphi_n$.
    By Observation~\ref{obs:simplify2}, there exist a formula $\psi_n$ in $Lang(MEL)$ which is not longer than $\psi_n'$.
    By theorem~\ref{theorem:formulaSizeGame}, the spoiler can win the FSG game starting at $\pairModels{A_n}{B_n}$. If $k$ is the lowest number of moves that the spoiler can win
    the FSG game starting at $\pairModels{A_n}{B_n}$, then by theorem~\ref{theorem:formulaSizeGame}, $k$ is a lower bound for the size of $\psi_n$.
    Now consider that the spoiler actually plays and wins this FSG game and constructs this \textit{winning} tree.
    We will \textit{tag} each node of this tree by its type, based on the types \sOne to \sSix that we defined before.
    We start from the root. For each  $1\leqslant j \leqslant 2^n-1$, we can \textit{tag} the root of this tree as type \sOne by replacing  $t = 0$ in the definition of type \sOne.
    Therefore, by observation~\ref{obs:obs3}, this tree has at least one leaf of either type \sFive or \sSix for each $1\leqslant j \leqslant 2^n-1$. Because we have $2^n-1$ choice for $j$,
    and because no node can be of types \sFive or \sSix, and at the same time of types $\mathbf{S5_i}$ or $\mathbf{S6_i}$ for $i\neq j$, therefore this tree has at least $2^n-1$ distinct leaves.
    Moreover, because the root is of type \sOne and cannot be closed with the first move, the tree has at least one no-root/no-leaf node. So to win this game the spoiler has to play
    at least $2^n$ moves, and because this is a lower bound on the size of $\psi_n$, we have that $|\psi_n|\geqslant 2^n$.
    As $\psi_n'$ can only be longer than $\psi$, it follows that $|\psi_n'|\geqslant 2^n$.

    So we showed that for any $n\in\mathbb{N}$, we can define a $\varphi_n\in Lang(ATHL)$ with length $n$, such that the shortest $\psi_n'\in Lang(ATEL)$ which is
    equivalent to $\varphi_n$ has at least the length of $2^n$. By definition~\ref{def:succinctness} this proves that \ATHL is exponentially more succinct than \ATEL.
  \end{proof}
 }

\end{document}